\documentclass[11pt]{article}
\usepackage{jheppub}

\usepackage[T1]{fontenc}
\usepackage[utf8]{inputenc}
\usepackage{lmodern}
\usepackage{microtype}
\usepackage{amsmath,amssymb,amsfonts,amsthm,mathtools}
\usepackage{physics}
\usepackage{bm}
\usepackage{booktabs}
\usepackage{graphicx}
\usepackage{xcolor}
\usepackage{tikz}
\usetikzlibrary{arrows.meta,calc,positioning,decorations.pathreplacing}
\usepackage[nameinlink,capitalise,noabbrev]{cleveref}
\usepackage{float}

\theoremstyle{plain}
\newtheorem{theorem}{Theorem}[section]
\newtheorem{proposition}[theorem]{Proposition}
\newtheorem{lemma}[theorem]{Lemma}

\newtheorem{conjecture}[theorem]{Conjecture}
\theoremstyle{definition}
\newtheorem{definition}[theorem]{Definition}
\theoremstyle{remark}
\newtheorem{remark}[theorem]{Remark}

\newcommand{\Neg}{\mathcal{N}}
\newcommand{\NegRate}{\mathcal{R}}
\newcommand{\Wnorm}{\mathcal{W}}
\newcommand{\Psurv}{\mathcal{P}}
\newcommand{\su}{\mathfrak{su}}
\newcommand{\SL}{\mathrm{SL}}
\newcommand{\SU}{\mathrm{SU}}
\newcommand{\AdS}{\mathrm{AdS}}

\newcommand{\sign}{\operatorname{sgn}}
\newcommand{\NB}{\mathrm{NB}}
\newcommand{\lAdS}{\ell_{\AdS}}
\newcommand{\ls}{\ell_s}
\newcommand{\Kchain}{\mathcal{H}_{\mathrm{Krylov}}}
\newcommand{\Hstr}{\mathcal{H}_{\mathrm{str}}}
\newcommand{\NKrylov}{\hat{N}_{\mathrm{Krylov}}}
\newcommand{\Nstr}{\hat{N}_{\mathrm{str}}}
\newcommand{\Cplx}{\mathcal{C}}

\title{The Normalised Wigner Negativity Rate as a Second-Moment Probe of Infall in AdS$_3$}

\author{Ritam Basu}
\affiliation{Department of Theoretical Physics,\\
Tata Institute of Fundamental Research,\\
1 Homi Bhabha Road, Mumbai 400005, India}
\emailAdd{ritam.basu@theory.tifr.res.in}

\abstract{%
In spread complexity, the average position of an operator along its Krylov chain, recovers the
right radial momentum of an infalling particle in AdS, yet it is a measure of the first moment,
irrespective of the spread of the wavepacket away from its classical trajectory. The rate of a
normalized Krylov-Wigner negativity can be proposed as a diagnostic of the second moment of the
boundary state that captures this spreading. Starting with the \emph{seed-normalized}
Krylov-Wigner distribution---that is, the Wigner transform of the descendant cloud, with the
decaying return amplitude divided out---we obtain an analytic Bessel form in the macroscopic
limit and compute its total negativity explicitly. Retaining the Bessel variable all the way
through, we find that the negativity goes as $\sinh^{4\Delta}(\pi t/\beta)$, while the raw,
normalized-state negativity saturates, as dictated by the $O(\sqrt{D})$ bound. Using the
exact negative binomial statistics of the Krylov chain, the normalized negativity is at late
times a fixed power of the second moment of the Krylov wavepacket,
$\Neg(t)\propto\bigl[\mathrm{Var}(\NKrylov)\bigr]^{\Delta}$, for every dimension $\Delta$. The
relation linearizes precisely at $\Delta=1$: only at this dimension does the negativity rate
track the growth rate of the Krylov variance, and there, through the momentum dictionary of
Caputa et al.~\cite{Caputa:2024}, the rate becomes the product of the proper radial position
and momentum, $\NegRate\propto\Cplx\,P_\rho$, i.e., the rate of the tidal stretch of nearby
geodesics falling into the horizon. We comment on the direction for future research, in
particular the interpretation of the transverse string size operator in terms of the Krylov
number operator through the common $\SU(1,1)$ discrete series.}

\begin{document}
\maketitle
\flushbottom
\newpage

\section{Introduction}
\label{sec:intro}

The question of decoding the emergence of classical gravity and geometry from quantum dynamics
of boundary conformal field theory (CFT) has been a longstanding problem of gauge-gravity duality \cite{Maldacena:1997re, Witten:1998qj, Gubser:1998bc}. Having noticed that the entanglement entropy is not a suitable characterization of late time geometry behind a black hole horizon, it was conjectured that the growing length of the Einstein-Rosen bridge is dual to the quantum complexity growth of the state on the boundary \cite{Susskind:2014rva, Stanford:2014jda, Brown:2015bva, Brown:2015lvg, Susskind:2018pmk, Balasubramanian:2024ysu}.

In contrast to the circuit complexity~\cite{Nielsen:2006, Jefferson:2017sdb, Caputa:2017urj, Caputa:2018kdj}, which is a function of the chosen gate and reference state, the spread (or Krylov) complexity~\cite{Balasubramanian:2022tpr, Parker:2018yvk, Barbon:2019wsy, Balasubramanian:2019wgd, Caputa:2021sib, Bhattacharya:2024}
is an intrinsic measure of operator growth. It has been found that a precise dictionary holds in the semiclassical limit between the rate of growth of the spread complexity and the proper radial momentum of an infalling particle in an AdS black hole~\cite{Caputa:2024}.

Nevertheless, the point particle dictionary accounts only for the classical centre of mass dynamics.
A wavepacket falling into a black hole is never described by a geodesic due to quantum and gravitational effects --- it will necessarily \emph{spread}.
As a result, the spread complexity, the first moment of the Krylov chain (the average position along the chain),
captures the motion of this centre of mass but fails to detect higher moments of the probability distribution.

Alongside other information-theoretic probes of bulk geometry~\cite{Banerjee:2023}, we suggest that the normalized generation rate of Krylov--Wigner negativity is a useful boundary observable sensitive to such spreading.
The Wigner negativity is a measure of ``magic'' content of the Krylov phase space~\cite{Basu:2024, Wootters:1987, Veitch:2012, Hudson:1974}; the growth of the negativity
probes the holographic dictionary via the \emph{second} moment, beyond the classical trajectory.
One important structural element of our construction is the recognition that the physically meaningful object is not the raw negativity of the evolving state (normalized or not)---since it saturates when the state becomes delocalized---but a \emph{normalized} or rescaled negativity where one divides out the trivial decay of the return amplitude. The normalized negativity and the corresponding rate are defined in section~\ref{sec:seed_normalized}.

The key results in this paper are as follows. The boundary-bound negativity rate results (i)--(iii) are analytical and do not rely on any bulk conjecture:
\begin{itemize}
\item The seed-normalized Krylov-Wigner distribution, which is the Wigner transform of the descendant cloud
with the return amplitude divided out, and the exact Bessel form expression for this distribution in
the macroscopic limit ($Q \gg 1$), derived from the $\SU(1,1)$ coherent-state structure, Stirling asymptotics and the Euler-Maclaurin summation formula (see Section~\ref{sec:seed_normalized});
\item The direct evaluation of the negativity of this distribution, in the entire range of the Bessel variable $Z = Q\tilde\theta$. The normalized negativity behaves as $\sinh^{4\Delta} (\pi t / \beta)$, whereas the raw, normalized-state negativity reaches saturation point (as dictated by its upper bound $\Neg \le \sqrt{D}$). We find the corresponding negativity rate (Section~\ref{sec:negativity});
\item We demonstrate that the normalized negativity is at late times a fixed power of the second
moment of the Krylov wavepacket, $\Neg\propto[\mathrm{Var}(\NKrylov)]^{\Delta}$, for every
dimension, and that this relation \emph{linearizes} precisely at $\Delta = 1$: only at this
dimension does the negativity rate match the growth rate of the variance, and there the rate is
the product of proper radial position and momentum, $\NegRate \propto \Cplx P_\rho$, the rate of
tidal stretching of neighboring infalling geodesics
(Sections~\ref{sec:matching}--\ref{sec:tidal_momentum});
\item As a direction for \emph{future work}, we provide a possible bulk string interpretation (Section~\ref{sec:discussion}): Quantizing the fundamental string in $\AdS_3$, we show that the transverse string size operator and the boundary Krylov chain possess the same quadratic Casimir
$\Delta(\Delta-1)$ and the discrete series representation $D^+_\Delta$ of $\SU(1,1)$ and make on this basis an operator identification conjecture $\Nstr \simeq\NKrylov$, deriving a
variance map to the fluctuations of the transverse string area. We stress that this bulk reading is speculative and separate from the boundary results above.
\end{itemize}

\section{Operator Growth, Spread Complexity, and the \texorpdfstring{$\SU(1,1)$}{SU(1,1)} Algebra}
\label{sec:krylov}

\subsection{The Lanczos Algorithm and Spread Complexity}

Let us consider a Hamiltonian $H$ that describes a chaotic holographic CFT.
Heisenberg dynamics applied to a primitive local operator $\mathcal{O}$ will generate
nested commutators and thus give rise to an increasing number of non-local operators, which is usually an indicator of scrambling \cite{Roberts:2014isa,
Hosur:2015ylk, Maldacena:2015waa}.

This construction can be systematically generated through the Lanczos algorithm~\cite{Viswanath:1994}, which gives rise to the basis of the Krylov subspace $\{|K_n\rangle\}_{n=0}^\infty$.
The starting point is the normalized vector $|K_0\rangle=\mathcal{O}|0\rangle/\|\mathcal{O}|0\rangle\|$, and all subsequent vectors can be generated recursively according to the formula
\begin{equation}
|A_n\rangle = (H-a_{n-1})|K_{n-1}\rangle - b_{n-1}|K_{n-2}\rangle .
\label{eq:lanczos_recursion}
\end{equation}
where the Lanczos parameters $a_n$ and $b_n$ are defined by
\begin{equation}
a_n = \langle K_n|H|K_n\rangle , \qquad
b_n = \|A_n\| , \quad |K_n\rangle = b_n^{-1}|A_n\rangle .
\label{eq:lanczos_coeff}
\end{equation}
The matrix , $H$ is precisely  \emph{ tridiagonal} relative to the  orthonormal basis $\{|K_n\rangle\}$.

\subsection*{Spread Complexity}

Expanding the time-evolved state via the conventional phase
convention~\cite{Balasubramanian:2022tpr}
\begin{equation}
|\psi(t)\rangle = e^{-iHt}|K_0\rangle = \sum_{n=0}^\infty i^{-n}\psi_n(t)|K_n\rangle ,
\end{equation}
projecting the Schr\"{o}dinger equation $\partial_t|\psi\rangle=-iH|\psi\rangle$ onto each
$|K_m\rangle$ yields the single-particle hopping equation
\begin{equation}
\partial_t\psi_n = -i\,a_n\psi_n + b_n\psi_{n-1} - b_{n+1}\psi_{n+1} ,
\label{eq:schrodinger}
\end{equation}
subject to $\psi_n(0)=\delta_{n,0}$ and $b_0=\psi_{-1}\equiv0$.
The \emph{spread complexity} is the mean position on this semi-infinite chain:
\begin{equation}
\Cplx(t) = \sum_{n=0}^\infty n\,|\psi_n(t)|^2
= \langle\hat{N}_{\mathrm{Krylov}}\rangle , \qquad
\hat{N}_{\mathrm{Krylov}} = \sum_{n=0}^\infty n\,|K_n\rangle\langle K_n| .
\label{eq:spread_complexity}
\end{equation}

\subsection{Local Operator Quenches in 2D CFTs}
Our focus will be on specializing to an operator $\mathcal{O}$ of holomorphic conformal
dimension $\Delta$ inserted at the origin of a 2D CFT at inverse temperature $\beta$
and regulated by $\varepsilon>0$:
\begin{equation}
|\psi(t)\rangle = \mathcal{N}\,e^{-iHt}e^{-\varepsilon H}\mathcal{O}(0)|\psi_\beta\rangle .
\label{eq:local_quench}
\end{equation}
It can be shown that the above system corresponds to a planar BTZ black hole~\cite{Banados:1992wn, Caputa:2015waa}.
In this case, the Lanczos coefficients for the moments of the return amplitude~\cite{Malvimat:2024, Caputa:2024b, Caputa:2021sib, Caputa:2021ori, Balasubramanian:2022tpr} take the exact closed forms
\begin{equation}
a_n = \frac{2\pi(n+\Delta)}{\beta\tan\!\left(\tfrac{2\pi\varepsilon}{\beta}\right)} ,
\qquad
b_n = \frac{\pi\sqrt{n(n+2\Delta-1)}}{\beta\sin\!\left(\tfrac{2\pi\varepsilon}{\beta}\right)} .
\label{eq:lanczos_CFT}
\end{equation}
The key structural feature is $b_n^2\propto n(n+2\Delta-1)$, which grows quadratically
in $n$---the signature of the $\SU(1,1)$ discrete series algebra.

\subsection{The \texorpdfstring{$\SU(1,1)$}{SU(1,1)} Algebra of the Krylov Chain}
\label{sec:su11_krylov}
But these closed-form expressions~\eqref{eq:lanczos_CFT} are not merely convenient
for computational purposes, because the quadratic growth $b_n^2\propto
n(n+2\Delta-1)$ characterizes an underlying non-compact symmetry. Namely, a semi-infinite
hopping chain whose off-diagonal entries grow like $n$ as $n\rightarrow \infty$ can never
be organized by a compact, finite dimensional symmetry. Instead, it is precisely the simplest
realization of a lowest-weight $\su(1,1)$ module, where $\su(1,1)$ is the Lie algebra of
$\SU(1,1)\cong\SL(2,\mathbb{R})$~\cite{Perelomov:1986,Caputa:2021sib,Caputa:2021ori,Balasubramanian:2022tpr}. The appearance of this symmetry
is tied to the conformal structure of the problem, since the local quench
\eqref{eq:local_quench} involves the primary field of dimension $\Delta$ and all its global
descendants, and the global piece of the two-dimensional conformal symmetry acting only
on one chiral tower is indeed $\su(1,1)$. Exposing the symmetry structure is what makes
the Krylov dynamics exactly solvable, and --- as we will use in section~\ref{sec:future_string} ---
provides us with a framework in which to compare the boundary hopping chain and the bulk string
Hilbert space.

More precisely, we introduce the generators associated with the Lie algebra $\su(1,1)$ on
our chain, with the position being promoted to a Cartan generator, and ladder operators implementing the hopping dynamics. The diagonal generator $K_0^{(K)}$ keeps track of our position in the Krylov subspace (shift by $\Delta$), while the ladder generators $K_\pm^{(K)}$ act to increase and decrease the Krylov level $n$. Precise matrix elements are fixed by demanding that the algebra close and that the lowering operator reproduce the Lanczos sequence~\eqref{eq:lanczos_CFT}.

\begin{lemma}[$\SU(1,1)$ structure of the thermal Krylov chain]
\label{lem:su11_krylov}
Define operators on $\Kchain=\mathrm{span}\{|K_n\rangle\}$:
\begin{align}
K_0^{(K)} &= \sum_{n=0}^\infty (n+\Delta)\,|K_n\rangle\langle K_n| , \label{eq:K0K}\\
K_+^{(K)} &= \sum_{n=0}^\infty \sqrt{(n+1)(n+2\Delta)}\,|K_{n+1}\rangle\langle K_n| , \label{eq:KpK}\\
K_-^{(K)} &= \bigl(K_+^{(K)}\bigr)^\dagger
= \sum_{n=1}^\infty \sqrt{n(n+2\Delta-1)}\,|K_{n-1}\rangle\langle K_n| . \label{eq:KmK}
\end{align}
These operators:
\begin{enumerate}
\item satisfy the $\mathfrak{su}(1,1)$ commutation relations
\begin{equation}
[K_0^{(K)},\,K_\pm^{(K)}] = \pm K_\pm^{(K)} , \qquad
[K_-^{(K)},\,K_+^{(K)}] = 2K_0^{(K)} ; \label{eq:su11_comm}
\end{equation}
\item realize the discrete series UIR $D_\Delta^+$ of $\SU(1,1)$ with Bargmann
index $k=\Delta$ and lowest-weight state satisfying $K_-^{(K)}|K_0\rangle=0$;
\item possess the quadratic Casimir
\begin{equation}
C_2^{(K)} = K_0^{(K)}\bigl(K_0^{(K)}-1\bigr) - K_+^{(K)}K_-^{(K)}
= \Delta(\Delta-1)\,\mathbf{1} \label{eq:casimir_K}
\end{equation}
on every basis vector $|K_n\rangle$; and
\item reproduce the Lanczos coefficients~\eqref{eq:lanczos_CFT} via
$b_n = \frac{\pi}{\beta\sin(2\pi\varepsilon/\beta)}\sqrt{n(n+2\Delta-1)}$.
\end{enumerate}
\end{lemma}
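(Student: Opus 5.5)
The plan is to check every claim directly on the basis vectors $|K_n\rangle$. Each operator acts diagonally or as a single shift on this basis, so all four items reduce to identities among the coefficients. We write $\beta_n \equiv \sqrt{n(n+2\Delta-1)}$. Re-indexing \eqref{eq:KpK} with $m=n+1$ gives $K_+^{(K)}|K_{n}\rangle=\beta_{n+1}|K_{n+1}\rangle$, since $(n+1)(n+2\Delta)=\beta_{n+1}^2$. Likewise \eqref{eq:KmK} gives $K_-^{(K)}|K_n\rangle=\beta_n|K_{n-1}\rangle$ with $\beta_0=0$. This also confirms that the stated form of $K_-^{(K)}$ really is the adjoint of $K_+^{(K)}$: the coefficients are real, so taking the adjoint amounts to transposing the matrix.

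For item 1, we have $[K_0^{(K)},K_\pm^{(K)}]|K_n\rangle=\bigl((n\pm1+\Delta)-(n+\Delta)\bigr)K_\pm^{(K)}|K_n\rangle$, which is the first relation. For the second, we compute
\[
[K_-^{(K)},K_+^{(K)}]|K_n\rangle=(\beta_{n+1}^2-\beta_n^2)|K_n\rangle .
\]
The difference is
\[
(n+1)(n+2\Delta)-n(n+2\Delta-1)=2n+2\Delta=2(n+\Delta),
\]
so the commutator equals $2K_0^{(K)}|K_n\rangle$. The one point needing care is the edge case $n=0$, where $\beta_0=0$ makes the formula still hold.

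For item 3, we have $K_+^{(K)}K_-^{(K)}|K_n\rangle=\beta_n^2|K_n\rangle$. The Casimir eigenvalue is therefore
\[
(n+\Delta)(n+\Delta-1)-n(n+2\Delta-1)=\Delta^2-\Delta ,
\]
independent of $n$. For item 2, the annihilation $K_-^{(K)}|K_0\rangle=0$ follows from $\beta_0=0$. The spectrum of $K_0^{(K)}$ is $\{\Delta+n\}_{n\ge0}$, each value appearing once. Repeated application of $K_+^{(K)}$ to $|K_0\rangle$ reaches every $|K_n\rangle$ with nonzero coefficients whenever $\Delta>0$, because then $\beta_{n}\neq0$ for $n\ge1$. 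Hence the module is cyclic, and also irreducible, since any invariant subspace contains a $K_0$-eigenvector and can be lowered to $|K_0\rangle$. Positivity of $\beta_n^2$ together with the adjoint relations $K_0^\dagger=K_0$ and $K_-^\dagger=K_+$ makes the representation unitary. Matching to the standard Perelomov/Bargmann form of $D^+_k$ (lowest $K_0$ weight $k$, Casimir $k(k-1)$) identifies $k=\Delta$.

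Item 4 follows by comparison with \eqref{eq:lanczos_CFT}. The Lanczos hopping amplitude $b_n$ is exactly $\frac{\pi}{\beta\sin(2\pi\varepsilon/\beta)}$ times the matrix element $\langle K_{n-1}|K_-^{(K)}|K_n\rangle=\beta_n$. Equivalently, the tridiagonal Hamiltonian is $H=\alpha(K_0^{(K)})+\gamma(K_+^{(K)}+K_-^{(K)})$ with constant $\alpha$ and $\gamma$ read off from $a_n$ and $b_n$. The main obstacle here is not computational but conventional. One must ensure that the Lanczos basis produced from the quench state coincides, phase for phase, with the abstract $|K_n\rangle$. This holds because the Lanczos algorithm yields nonnegative $b_n$ and a unique orthonormal basis. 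Since the $b_n$ are given by \eqref{eq:lanczos_CFT}, this is taken as input from the earlier section.
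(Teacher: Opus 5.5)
Your proposal is correct and follows essentially the same route as the paper. Like the paper, you act with $K_0^{(K)}$ and $K_\pm^{(K)}$ on each $|K_n\rangle$ to verify the commutators and the Casimir, use $K_-^{(K)}|K_0\rangle=0$ together with the spectrum $\{\Delta+n\}_{n\ge0}$ to identify $D^+_\Delta$, and compare $\langle K_{n-1}|K_-^{(K)}|K_n\rangle$ term by term with $b_n$. Your cyclicity/irreducibility and unitarity checks are a genuine tightening of item~2: the paper only appeals to Schur's lemma, and a constant Casimir does not by itself imply irreducibility.
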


\noindent

The four statements have distinct physical content. Item~(1) certifies that
$\{K_0^{(K)},K_\pm^{(K)}\}$ genuinely generate $\su(1,1)$ and not merely a deformation of
it. Item~(2) identifies which representation is at play: the seed operator
$|K_0\rangle$ is the lowest-weight vector, annihilated by $K_-^{(K)}$, so the Krylov
chain is the orbit $\{(K_+^{(K)})^n|K_0\rangle\}$ swept out by the raising operator, and
the representation is the positive discrete series $D_\Delta^+$ labelled by the single
number $k=\Delta$. Item~(3) is the invariant statement underlying everything that
follows---the Casimir is a c-number, fixed entirely by $\Delta$---while item~(4) confirms
that this abstract structure reproduces the microscopically computed Lanczos data, so no
assumption beyond the CFT two-point function has been smuggled in.

\begin{proof}
\textit{(1) Commutation relations.}
Acting on a generic basis vector and using \eqref{eq:K0K}--\eqref{eq:KmK},
\begin{align}
K_0^{(K)}K_+^{(K)}|K_n\rangle &= (n+1+\Delta)\sqrt{(n+1)(n+2\Delta)}\,|K_{n+1}\rangle , \\
K_+^{(K)}K_0^{(K)}|K_n\rangle &= (n+\Delta)\sqrt{(n+1)(n+2\Delta)}\,|K_{n+1}\rangle ,
\end{align}
whose difference is $[K_0^{(K)},K_+^{(K)}]|K_n\rangle = K_+^{(K)}|K_n\rangle$, and
likewise $[K_0^{(K)},K_-^{(K)}]=-K_-^{(K)}$ by Hermitian conjugation. For the remaining
relation,
\begin{align}
K_-^{(K)}K_+^{(K)}|K_n\rangle &= (n+1)(n+2\Delta)|K_n\rangle , \\
K_+^{(K)}K_-^{(K)}|K_n\rangle &= n(n+2\Delta-1)|K_n\rangle ,
\end{align}
so that
$[K_-^{(K)},K_+^{(K)}]|K_n\rangle = (2n+2\Delta)|K_n\rangle = 2K_0^{(K)}|K_n\rangle$,
which is~\eqref{eq:su11_comm}.

\textit{(2) Discrete series.}
The lowering operator~\eqref{eq:KmK} starts its sum at $n=1$, so its action on the seed
vanishes, $K_-^{(K)}|K_0\rangle=0$: the chain has a bottom rung and no states of negative
level, the defining property of a lowest-weight module. From~\eqref{eq:K0K} the spectrum
of $K_0^{(K)}$ is $\{\Delta,\Delta+1,\Delta+2,\dots\}$, equally spaced and bounded below
by $\Delta$, which is exactly the weight content of the positive discrete series
$D_\Delta^+$ with Bargmann index $k=\Delta$~\cite{Perelomov:1986}. The index is therefore
not a free label but is pinned to the conformal dimension of the quenching operator.

\textit{(3) Casimir.}
Using $K_+^{(K)}K_-^{(K)}|K_n\rangle = n(n+2\Delta-1)|K_n\rangle$ from above,
\begin{align}
C_2^{(K)}|K_n\rangle
&= \bigl[(n+\Delta)(n+\Delta-1) - n(n+2\Delta-1)\bigr]|K_n\rangle \notag \\
&= \bigl[\,n^2+n(2\Delta-1)+\Delta(\Delta-1) - n^2 - n(2\Delta-1)\,\bigr]|K_n\rangle
= \Delta(\Delta-1)|K_n\rangle .
\end{align}
The level-dependent pieces $n^2$ and $n(2\Delta-1)$ cancel identically, so $C_2^{(K)}$
takes the same value $\Delta(\Delta-1)$ on every rung of the chain, establishing the
operator identity~\eqref{eq:casimir_K}. By Schur's lemma this constancy is the statement
that the entire chain lies in a single irreducible representation.

\textit{(4) Lanczos data.}
The lower-diagonal matrix elements of $K_-^{(K)}$ in~\eqref{eq:KmK} are
$\sqrt{n(n+2\Delta-1)}$, which differ from the Lanczos coefficients~\eqref{eq:lanczos_CFT}
only by the overall constant $\pi/[\beta\sin(2\pi\varepsilon/\beta)]$; this fixes the
normalisation quoted in item~(4) and matches the microscopic data exactly. $\square$
\end{proof}

The most important consequence is item~(3): because the Casimir is a pure number
$\Delta(\Delta-1)\mathbf{1}$, the Krylov chain carries a \emph{single} irreducible
representation, characterised by $\Delta$ alone and by nothing else about the microscopic
Hamiltonian. This invariant labelling is what makes a meaningful comparison with the bulk
possible: any other Hilbert space that realises $D_\Delta^+$ with the same Bargmann index
shares the same Casimir, a necessary condition we return to in
section~\ref{sec:future_string}.

Within this representation, the natural observable is the Cartan generator itself. The
spread complexity~\eqref{eq:spread_complexity} is the mean chain position, and
comparing~\eqref{eq:K0K} with~\eqref{eq:spread_complexity} identifies it with the
shifted Cartan operator,
\begin{equation}
\hat{N}_{\mathrm{Krylov}} = K_0^{(K)} - \Delta\cdot\mathbf{1} ,
\label{eq:NKrylov_K0}
\end{equation}
so that $\hat{N}_{\mathrm{Krylov}}$ has spectrum $\{0,1,2,\dots\}$ and counts levels above
the lowest-weight seed. Finally, the tridiagonal Krylov
Hamiltonian---diagonal part $a_n$ plus hopping $b_n$ from~\eqref{eq:lanczos_CFT}---is, in
view of \eqref{eq:K0K}--\eqref{eq:KmK}, a real linear combination of $K_0^{(K)}$ and
$K_\pm^{(K)}$, i.e.\ an element of $\su(1,1)$. Consequently the evolution operator
$e^{-iHt}$ is a group element, and acting on the lowest-weight state it produces an
$\SU(1,1)$ (Perelomov) coherent state. This is the structural reason the occupation
amplitudes assemble into the closed coherent-state form of
section~\ref{sec:coherent}, and ultimately into the negative-binomial distribution of
Lemma~\ref{lem:NB}.

\subsection{The Proper Momentum Correspondence}
\label{sec:conjecture}
The $\SU(1,1)$ symmetry in the semiclassical regime of large-$n$ is used by Caputa et al.~\cite{Caputa:2024} to establish the result that the speed of spread complexity matches the proper radial momentum of an infalling massive point particle within the BTZ background:
\begin{equation}
\frac{\mathrm{d}\Cplx}{\mathrm{d}t} \propto P_\rho(t) .
\label{eq:proper_momentum}
\end{equation}
This classical equivalence holds only for the \emph{first moment} $\langle\hat{N}_{\mathrm{Krylov}}\rangle$.
However, to study how the width of the wavefunction distribution changes---or how the  state departs from the point-particle approximation---requires access to higher moments and hence the full Wigner phase-space geometry.

\section{Coherent-State Amplitudes and the Seed-Normalized Distribution}
\label{sec:seed_normalized}

\subsection{The \texorpdfstring{$\SU(1,1)$}{SU(1,1)} Coherent State Ansatz}
\label{sec:coherent}

The algebraic structure found in section~\ref{sec:su11_krylov} largely determines the
evolved state. The tridiagonal Krylov Hamiltonian being an element of $\su(1,1)$, the
evolution operator $e^{-iHt}$ is a group element, and it acts on the lowest-weight
seed $|K_0\rangle$ to generate an $\SU(1,1)$ (Perelomov) coherent state~\cite{Perelomov:1986, Caputa:2021sib, Balasubramanian:2022tpr}--the group-theoretic analog of a squeezed
vacuum. Such states are characterised by a single complex number, the squeezing
parameter $A(t)$, whose square determines the wavefunction of the coherent state on
the $n$-th rung of the chain through the representation $D_\Delta^+$,
\begin{equation}
\psi_n(t) = \sqrt{\frac{\Gamma(2\Delta+n)}{n!\,\Gamma(2\Delta)}}\;A(t)^n\,\psi_0(t) .
\label{eq:coherent_ansatz}
\end{equation}
The combinatorial prefactor is the norm of the state
$(K_+^{(K)})^n|K_0\rangle$ in the representation $D_\Delta^+$, so~\eqref{eq:coherent_ansatz}
is not an approximation but the exact wavefunction of a coherent state on the chain, with
a single unknown $A(t)$ determined from the dynamics in section~\ref{sec:A_derivation}.
Normalising the probability current through $\sum_n|\psi_n(t)|^2=1$ and evaluating the
geometric-type series fixes the seed amplitude,
\begin{equation}
|\psi_0(t)|^2=(1-|A(t)|^2)^{2\Delta} ,
\label{eq:psi0_norm}
\end{equation}
which is real and positive precisely if $|A(t)|<1$, corresponding to the physically sensible
regime of a normalisable coherent state.

From~\eqref{eq:coherent_ansatz} follows an immediate consequence for the statistics: the
probabilities $p_n(t)=|\psi_n(t)|^2$ are not any old sequence, but they constitute a
\emph{one-parameter family}, controlled by $|A(t)|^2$ alone. This one-parameter family
turns out to be a well-known distribution---the negative binomial---which we
now introduce in a closed form, since its moments will become the boundary values for
the bulk string geometry.

\begin{lemma}[Negative Binomial Distribution]
\label{lem:NB}
The probability distribution given by \eqref{eq:coherent_ansatz} is the
$\NB(2\Delta,|A(t)|^2)$:
\begin{equation}
|\psi_n(t)|^2 = \binom{n+2\Delta-1}{n}|A(t)|^{2n}(1-|A(t)|^2)^{2\Delta} ,
\label{eq:NB_dist}
\end{equation}
with exact mean and variance
\begin{align}
\langle\hat{N}_{\mathrm{Krylov}}\rangle &= \frac{2\Delta|A(t)|^2}{1-|A(t)|^2} ,
\label{eq:nb_mean_lemma}\\
\mathrm{Var}(\hat{N}_{\mathrm{Krylov}}) &= \frac{2\Delta|A(t)|^2}{(1-|A(t)|^2)^2} .
\label{eq:nb_var_lemma}
\end{align}
\end{lemma}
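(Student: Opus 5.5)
The plan is to square the coherent-state ansatz~\eqref{eq:coherent_ansatz} and then read off the moments from a generating function. Write $x\equiv|A(t)|^2$. Squaring the modulus of~\eqref{eq:coherent_ansatz} gives
\begin{equation*}
|\psi_n(t)|^2=\frac{\Gamma(2\Delta+n)}{n!\,\Gamma(2\Delta)}\,x^n\,|\psi_0(t)|^2 .
\end{equation*}
The Gamma ratio equals the generalized binomial coefficient $\binom{n+2\Delta-1}{n}$, which holds for real $\Delta>0$ via $\Gamma(2\Delta+n)/\Gamma(2\Delta)=(2\Delta)_n$. Substituting the seed normalisation~\eqref{eq:psi0_norm} then yields~\eqref{eq:NB_dist} directly.

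To make the argument self-contained, I would recheck~\eqref{eq:psi0_norm} using the negative binomial series
\begin{equation*}
\sum_{n\ge0}\frac{(2\Delta)_n}{n!}x^n=(1-x)^{-2\Delta},
\end{equation*}
which converges for $0\le x<1$. This is exactly the normalisability condition $|A(t)|<1$ already stated in the text.

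For the moments, I would introduce the probability generating function
\begin{equation*}
G(s)=\sum_n p_n s^n=\left(\frac{1-x}{1-sx}\right)^{2\Delta}.
\end{equation*}
The mean is $G'(1)=2\Delta x/(1-x)$. The second factorial moment is $G''(1)=2\Delta(2\Delta+1)x^2/(1-x)^2$. The variance then follows as $G''(1)+G'(1)-G'(1)^2$. The algebraic step is to combine these over the common denominator $(1-x)^2$: the $x^2$ terms leave $2\Delta x^2$, and adding $2\Delta x(1-x)$ gives $2\Delta x$. This reproduces~\eqref{eq:nb_var_lemma}. I would also note that $\langle\hat N_{\mathrm{Krylov}}\rangle$ is precisely the first moment of $p_n$, by~\eqref{eq:spread_complexity}.

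The only point needing care is justifying term-by-term differentiation of $G$ at $s=1$. This is fine because the series has radius of convergence $1/x>1$. I would also state explicitly that everything holds for non-integer $2\Delta$, through the Pochhammer form of the binomial coefficient.
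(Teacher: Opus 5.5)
Your proposal is correct and follows the paper's proof essentially step for step: square the ansatz, identify the Gamma ratio as the generalized binomial coefficient, insert the seed normalisation~\eqref{eq:psi0_norm}, and read off the mean and variance from the generating function $G(s)=(1-x)^{2\Delta}(1-sx)^{-2\Delta}$ via $G'(1)$ and $G''(1)+G'(1)-G'(1)^2$. Your extra checks only make explicit what the paper leaves implicit: re-deriving the normalisation from the series, spelling out the $G''(1)$ algebra, and justifying term-by-term differentiation at $s=1$ through the radius of convergence $1/x>1$.
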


\noindent
The two parameters have an evident interpretation. The number of ``failures'', $r=2\Delta$,
equals twice the Bargmann index of the
representation, so the dimension of the quenching operator controls directly the shape of
the distribution. The ``success probability'', $p=|A(t)|^2$, is the measure of squeezing,
monotonously increasing towards unity as the state spreads down the chain. As $|A|\to1$, both
moments diverge, indicating the spreading of the wavepacket characteristic of operator
growth in a chaotic theory. ( On a true finite-dimensional chain the spreading is capped by
the Heisenberg time, after which the Krylov-Wigner negativity plateaus~\cite{Basu:2024}.
The coherent-state description below works up to the plateau, in the macroscopic regime. )

\begin{proof}
Substituting~\eqref{eq:coherent_ansatz} and the normalisation~\eqref{eq:psi0_norm},
the probability of the level-$n$ is
\[
|\psi_n|^2 = (1-|A|^2)^{2\Delta}\frac{\Gamma(2\Delta+n)}{n!\,\Gamma(2\Delta)}|A|^{2n}
= \binom{n+2\Delta-1}{n}(1-|A|^2)^{2\Delta}|A|^{2n} ,
\]
where the second equality used
$\Gamma(2\Delta+n)/[n!\,\Gamma(2\Delta)]=\binom{n+2\Delta-1}{n}$. This is precisely the
negative binomial distribution $\NB(2\Delta,|A|^2)$ with $r=2\Delta$ and $p=|A|^2$. Its first two
moments are given by the generating function $G(z)=\sum_n p_n z^n=(1-p)^r(1-pz)^{-r}$:
differentiating at $z=1$ yields the mean $G'(1)=rp/(1-p)$, which is~\eqref{eq:nb_mean_lemma}, and
the variance $G''(1)+G'(1)-[G'(1)]^2=rp/(1-p)^2$, which is~\eqref{eq:nb_var_lemma}. $\square$
\end{proof}

There are two features of Lemma~\ref{lem:NB} which drive the analysis of the paper.
First, comparing the two moments demonstrates that the distribution is \emph{overdispersed},
i.e.\ the variance contains an extra factor of $(1-|A|^2)^{-1}$ compared to the mean, and thus
grows faster as the state delocalises. Expressing this relation in terms of the squeezing
parameter leads to the exact identity
\begin{equation}
\mathrm{Var}(\hat{N}_{\mathrm{Krylov}})
= \langle\hat{N}_{\mathrm{Krylov}}\rangle
+ \frac{1}{2\Delta}\langle\hat{N}_{\mathrm{Krylov}}\rangle^2 ,
\label{eq:var_mean_identity}
\end{equation}
which we use in section~\ref{sec:discussion} to identify the bulk geometry of the boundary
statistics. Second, since the whole distribution depends on the single combination $|A(t)|^2$,
the time dependence of any moment can be expressed in terms of $|A(t)|^2$ alone; once $A(t)$ is
determined in section~\ref{sec:A_derivation}, the late-time scalings
$\langle\hat{N}_{\mathrm{Krylov}}\rangle\sim\sinh^2(\pi t/\beta)$ and
$\mathrm{Var}(\hat{N}_{\mathrm{Krylov}})\sim\sinh^4(\pi t/\beta)$ follow.

\subsection{Exact Determination of the Squeezing Parameter}
\label{sec:A_derivation}

The coherent-state ansatz~\eqref{eq:coherent_ansatz} reduces the entire evolved state to a
single unknown function---the squeezing parameter $A(t)$. We now fix it from the
dynamics. Naively, this would involve solving the complete infinite hopping problem of
~\eqref{eq:schrodinger}. But the $\SU(1,1)$ structure of section~\ref{sec:su11_krylov} implies
that the coherent state remains coherent under the group evolution. So the
evolution equation must be solved by just a \emph{single} component: the rest of the components
are automatically satisfied by the algebra. The easiest component to work with is the lowest
component of the Schrödinger equation. Setting $n=0$ in~\eqref{eq:schrodinger}, where
$b_0=0$ and $\psi_{-1}\equiv0$ truncate the recursion, and using the expression for
$\psi_1$ from~\eqref{eq:coherent_ansatz}, we obtain a first-order differential equation
for the seed amplitude alone,
\begin{equation}
\partial_t\psi_0 = -i\,a_0\psi_0 - b_1\sqrt{2\Delta}\,A\psi_0 .
\label{eq:n0_eq}
\end{equation}
Dividing this equation through by $\psi_0$ eliminates the need for the global
normalisation and exposes $A$ explicitly:
\begin{equation}
\boxed{A(t) = \frac{-\partial_t\ln\psi_0(t) - i\,a_0}{b_1\sqrt{2\Delta}}.}
\label{eq:A_general}
\end{equation}
Therefore, $A(t)$ is uniquely determined by knowing a single function $\psi_0(t)$.

This function is not arbitrary: $\psi_0(t)=\langle K_0|\psi(t)\rangle$ is the return
amplitude of the quench, and for a primary operator $\mathcal{O}$ of the dimension $\Delta$ in
a 2D thermal CFT the amplitude is fixed by conformal invariance to be the regularised
thermal two-point function,
\begin{equation}
\psi_0(t) = \left[\frac{\sinh\!\left(\tfrac{\pi(t-2i\varepsilon)}{\beta}\right)}
{\sinh\!\left(\tfrac{-2\pi i\varepsilon}{\beta}\right)}\right]^{-2\Delta} ,
\label{eq:survival}
\end{equation}
where $\varepsilon$ is the Euclidean smearing of the operator---a regularization that makes the
inserted operator normalisable and provides the standard $t\to t-2i\varepsilon$
prescription. Its logarithmic time derivative is elementary,
\begin{equation}
\partial_t\ln\psi_0(t)
= -\frac{2\pi\Delta}{\beta}\coth\!\left(\frac{\pi(t-2i\varepsilon)}{\beta}\right) .
\label{eq:log_deriv}
\end{equation}
Substituting $a_0$ and $b_1\sqrt{2\Delta}$ from the Lanczos data~\eqref{eq:lanczos_CFT} and
the above $\partial_t\ln\psi_0$ into~\eqref{eq:A_general}, the common factor
$2\pi\Delta/\beta$ cancels, yielding the intermediate expression
\begin{equation}
A(t) = \frac{\coth\!\left(\tfrac{\pi(t-2i\varepsilon)}{\beta}\right)
- i\cot\!\left(\tfrac{2\pi\varepsilon}{\beta}\right)}{\csc\!\left(\tfrac{2\pi\varepsilon}{\beta}\right)} .
\label{eq:A_intermediate}
\end{equation}
This cancellation is important: it shows that $A(t)$ does not depend on $\Delta$. The
conformal dimension $\Delta$ does not enter into the expression for the squeezing parameter
at all, it only enters as the parameter $r=2\Delta$ of the occupation statistics of
Lemma~\ref{lem:NB}. Consequently the time dependence of the \emph{shape} of the Krylov
wavepacket (and thus the functional form of every scaling derived below) becomes universal,
while $\Delta$ only dictates the relative weight. This universality allows to derive a single
matching condition that chooses the critical dimension in section~\ref{sec:matching}.

The remaining manipulation is simple trigonometry: setting $a=\pi t/\beta$ and
$b=2\pi\varepsilon/\beta$ and combining~\eqref{eq:A_intermediate} over the common
denominator $\sin(b)[\sinh^2(a)+\sin^2(b)]$, the numerator of $\coth(a-ib)-i\cot(b)$
simplifies to $\sinh(a)[\cosh(a)\sin(b)-i\sinh(a)\cos(b)]$. Recognising the bracket as
$-i\sinh(a+ib)$ and using $\sinh^2(a)+\sin^2(b)=|\sinh(a+ib)|^2=\sinh(a+ib)\sinh(a-ib)$ to
factor the denominator, we obtain the compact expression
\begin{equation}
\boxed{A(t) = \frac{-i\sinh(\pi t/\beta)}{\sinh\!\left(\tfrac{\pi(t-2i\varepsilon)}{\beta}\right)}.}
\label{eq:A_final}
\end{equation}
This shows that the squeezing parameter is the ratio of thermal sines: its modulus encodes
the spreading of the wavepacket, while the explicit $-i$ and the regulator give the
phase $\varphi_A(t)$ necessary to rotate the Wigner distribution of
section~\ref{sec:bessel}.

\begin{remark}
\label{rem:sign}
Since $|\sinh(x\pm iy)|^2=\sinh^2(x)+\sin^2(y)$ does not depend on the sign of $y$,
two forms of $A(t)$ that have the same $|A(t)|^2$ yield identical physics. From~\eqref{eq:A_final} one
can see that the late-time behaviour of $|A(t)|^2$ is given by
$|A|^2\to1-\sin^2(2\pi\varepsilon/\beta)/\sinh^2(\pi t/\beta)$.
\end{remark}

In other words, $|A(t)|^2=\sinh^2(\pi t/\beta)/[\sinh^2(\pi t/\beta)+\sin^2(2\pi\varepsilon/\beta)]$,
such that the complementary factor $1-|A(t)|^2$ is exactly the quantity $\xi(t)$ defined in~\eqref{eq:xi_def}.
Substituting this into~\eqref{eq:nb_mean_lemma}-\eqref{eq:nb_var_lemma} gives the $\sinh^2$
and $\sinh^4$ growth rates behind the negativity and variance rate derived in the subsequent
sections.

\subsection{Seed-Normalized (Return-Normalized) Amplitudes}
\label{sec:normalization_upfront}

Before building the phase space distribution it is important to normalise the state,
because the normalisation choice dictates which quantity carries the physical information.
The evolved state on the chain,
$|\psi(t)\rangle=\sum_n i^{-n}\psi_n(t)|K_n\rangle$, is always normalized, $\sum_n|\psi_n(t)|^2=1$.
The negativity of the Wigner function of such a state, as we will see, saturates, because the
quasi-probability distribution spreads out further and further over the phase space while
getting smaller in height, keeping the total quasi-probability constant. That saturation
is the statement about the return amplitude decaying, not the geometric spreading we wish
to capture.

Introducing the \emph{return probability}
\begin{equation}
\Psurv(t)\equiv|\psi_0(t)|^2=(1-|A(t)|^2)^{2\Delta} ,
\label{eq:survival_prob}
\end{equation}
it should be emphasized that the evolved state is always normalized for \emph{any} $t$ and
for \emph{any} $\beta$. By the unitarity the state $|\psi(t)\rangle=e^{-iHt}|K_0\rangle$ is always normalised,
which means that the equality~\eqref{eq:survival_prob} guarantees
$\sum_n|\psi_n|^2=|\psi_0|^2\sum_n\binom{n+2\Delta-1}{n}|A|^{2n}=|\psi_0|^2(1-|A|^2)^{-2\Delta}=1$
at any temperature, with no distinguished value of $\beta$ where normalisation fails.
Therefore the Wigner function of the state satisfies $\sum_{Q,p}W(Q,p,t)=\mathrm{Tr}\,\rho=1$ at
any $\beta$. The renormalisation procedure we introduce below is \emph{not} an
operation restoring the normalisation, it is only to strip the return amplitude and to leave the geometric spreading.

To expose the spreading we therefore normalize the amplitudes right from the start. We define
the \emph{seed-normalized} (return-normalized) amplitudes
\begin{equation}
\chi_n(t)\equiv\frac{\psi_n(t)}{\psi_0(t)}
=\sqrt{\frac{\Gamma(2\Delta+n)}{n!\,\Gamma(2\Delta)}}\,A(t)^n ,\qquad
\chi_0(t)=1.
\label{eq:chi_def}
\end{equation}

These are the amplitudes of the descendant cloud \emph{relative to the surviving seed}. The
overall phase and magnitude of the return amplitude have been stripped off, leaving the form of the
spreading packet. Their norm square sums up to the inverse survival probability,
\begin{equation}
\sum_{n=0}^\infty|\chi_n(t)|^2
=\frac{1}{\Psurv(t)}=(1-|A(t)|^2)^{-2\Delta} ,
\label{eq:chi_norm}
\end{equation}
that diverges when the state delocalizes ($|A|\to1$), and serves the purpose of an effective
number of visited descendants.

Consequently the \emph{seed-normalized Krylov--Wigner distribution} is introduced as the discrete
Wigner function of the $\chi_n$,
\begin{equation}
\Wnorm(Q,p,t)
\equiv\frac{1}{D}\sum_{m=-Q}^{Q}e^{4\pi i m p/D}\,\chi_{Q+m}(t)\,\chi^*_{Q-m}(t)
=\frac{W(Q,p,t)}{\Psurv(t)} ,
\label{eq:Wnorm_def}
\end{equation}
where $W$ is the Wigner function of the normalized state. As $\sum_{Q,p}W=\mathrm{Tr}\,\rho=1$, the
quasi-probability of the seed-normalized distribution is equal to~\eqref{eq:chi_norm},
\begin{equation}
\sum_{Q,p}\Wnorm(Q,p,t)=\frac{1}{\Psurv(t)} .
\label{eq:Wnorm_total}
\end{equation}
Alternatively, $\Wnorm$ is the Wigner transform of the \emph{unnormalized} descendant operator
$\rho_{\mathrm{desc}}(t)=|\psi(t)\rangle\langle\psi(t)|/\Psurv(t)$.

\begin{remark}[What $\Wnorm$ is, and is not]
\label{rem:whatisWnorm}
As a matter of construction $\Wnorm$ is \emph{not} the Wigner function of a normalized state: its
quasi-probability~\eqref{eq:Wnorm_total} is $1/\Psurv(t)\neq1$. It is a rescaled phase-space
quantity---the negativity computed from it below is proportional to the raw (normalized-state)
negativity divided by the survival probability, i.e.\ the magic content per unit surviving return
amplitude. The normalization procedure is not a projective post-selection onto a certain subspace;
instead it is just the uniform rescaling by $1/\Psurv(t)$, which compensates for the return
amplitude decay. It is precisely the meaning of the ``normalized negativity'' in
section~\ref{sec:negativity}. This is what makes the geometric spreading (and not probability
leakage) the growing quantity. In Remark~\ref{rem:diagnostic} we discuss the implications of this
choice of normalization for the matching problem.
\end{remark}

\subsection{Bessel Function Form of the Seed-Normalized Distribution}
\label{sec:bessel}

We derive here the closed form of $\Wnorm$ in the macroscopic regime $Q\gg1$, in which the
excitation is deeply inside the chain and the dual description holds. Since $\chi_n$ differs from
$\psi_n$ only by stripping the common factor $\psi_0(t)$, the procedure is that of finding the
Wigner function of an $\SU(1,1)$ coherent state without the overall survival factor. Substituting
the definition of $\chi_n$ in~\eqref{eq:chi_def} to~\eqref{eq:Wnorm_def}, and noting that
$A(t)=|A(t)|e^{i\varphi_A}$, the resulting phases unite into one linear exponent, and the
magnitude factors turn into a ratio of the Gamma functions,
\begin{equation}
\Wnorm(Q,p,t)=\frac{|A(t)|^{2Q}}{D\,\Gamma(2\Delta)}
\sum_{m=-Q}^{Q}e^{im(4\pi p/D+2\varphi_A)}\,
\frac{\sqrt{\Gamma(2\Delta+Q+m)\Gamma(2\Delta+Q-m)}}{\sqrt{(Q+m)!(Q-m)!}} .
\label{eq:Wnorm_discrete}
\end{equation}
The prefactor is the pure geometric factor $|A|^{2Q}$; all the phase-space oscillations arise from
the summand, where the kernel of the sum $K(Q,m)$ (the ratio of Gamma functions) sets the
profile in the shift variable $m$, and the exponential gives the phase depending on the
conjugate momentum $4\pi p/D+2\varphi_A$.

Two standard techniques help to linearize the sum. First, by the asymptotics of Stirling
(Appendix~\ref{app:stirling}) the kernel simplifies to a smooth power-law form,
\begin{equation}
K(Q,m)\approx Q^{2\Delta-1}\bigl(1-x^2\bigr)^{\Delta-1/2},\qquad x\equiv m/Q\in[-1,1] .
\label{eq:stirling_kernel}
\end{equation}
Second, since $\Delta>1/2$ the profile vanishes at $x=\pm1$; so the Euler--Maclaurin summation
formula (Appendix~\ref{app:euler}) converts the sum to the integral with no boundary term,
\[
Q^{2\Delta}\!\int_{-1}^{1}\!\bigl(1-x^2\bigr)^{\Delta-1/2}e^{iZx}\,\mathrm{d}x ,
\qquad
Z\equiv Q\!\left(\frac{4\pi p}{D}+2\varphi_A\right),
\]
so that the single rescaled variable $Z$ carries the entire phase-space dependence. This integral
is Poisson's representation of the Bessel function: for any $\Delta>-\tfrac12$,
\begin{equation}
\int_{-1}^{1}\bigl(1-x^2\bigr)^{\Delta-1/2}e^{iZx}\,\mathrm{d}x
=\sqrt{\pi}\,\Gamma\!\left(\Delta+\tfrac12\right)\!\left(\frac{2}{Z}\right)^{\!\Delta}J_\Delta(Z) .
\label{eq:poisson_bessel}
\end{equation}
Combining~\eqref{eq:poisson_bessel} with the prefactor of~\eqref{eq:Wnorm_discrete} and using
Legendre duplication $\Gamma(2\Delta)=2^{2\Delta-1}\pi^{-1/2}\Gamma(\Delta)\Gamma(\Delta+\tfrac12)$
to collapse the Gamma and power-of-two factors, we obtain the closed-form seed-normalized
distribution at leading order in the macroscopic limit:
\begin{equation}
\boxed{\Wnorm(Q,p,t)\approx
\frac{\pi\,|A(t)|^{2Q}\,Q^{2\Delta}}{D\,2^{\Delta-1}\,\Gamma(\Delta)\,Z^\Delta}\,J_\Delta(Z) .}
\label{eq:wigner_bessel}
\end{equation}
Equation~\eqref{eq:wigner_bessel} is a \emph{leading macroscopic (continuum) expression}, not an
exact finite-chain result. Relative to the normalised-state Wigner function it is cleaner: the
survival factor $\Psurv(t)=|\psi_0|^2$ is absent, having been divided out at the level of the
amplitudes~\eqref{eq:chi_def}. The everywhere-positive envelope $|A|^{2Q}Q^{2\Delta}$ fixes the
radial extent, while the oscillatory factor $J_\Delta(Z)/Z^\Delta$---and in particular its
sequence of sign-changing zeros---produces the alternating troughs of quasi-probability whose
total we compute next. These negative troughs, absent for any Gaussian (classical) state, are the
non-classical content diagnosing that the infalling excitation is not a structureless point
particle.

\section{Normalized Negativity and Its Rate}
\label{sec:negativity}

\subsection{Negativity of the Seed-Normalized Distribution}

The non-classicality of the descendant cloud is measured by the total negativity of the
seed-normalized distribution, its phase-space $1$-norm,
\begin{equation}
\Neg(t)\equiv\sum_{Q,p}\bigl|\Wnorm(Q,p,t)\bigr|
=\frac{1}{\Psurv(t)}\sum_{Q,p}\bigl|W(Q,p,t)\bigr|
=\frac{\Neg_{\mathrm{raw}}(t)}{\Psurv(t)} ,
\label{eq:negativity_def}
\end{equation}
where $\Neg_{\mathrm{raw}}(t)=\sum_{Q,p}|W|$ is the negativity of the normalised state, which
obeys the rigorous bound $1\le\Neg_{\mathrm{raw}}\le\sqrt{D}$~\cite{Basu:2024} and, as shown
below, saturates at an $O(1)$, $\Delta$-dependent constant within the regime of validity of the
macroscopic description. Thus $\Neg(t)$ is the magic content per unit surviving return
amplitude (Remark~\ref{rem:whatisWnorm}); dividing out $\Psurv$ removes the trivial decay of
the return amplitude and leaves the geometric growth.

\subsection{The Rate via a Real Replica Functional}

Differentiating $\Neg(t)$ is delicate because $|\Wnorm|$ is kinked wherever $\Wnorm$ vanishes,
i.e.\ on the moving zero set of $J_\Delta(Z)$. We regulate with a smooth surrogate.

\begin{definition}[Real Replica Functional]
For $k>1/2$, define the strictly real, non-negative functional
\begin{equation}
\Neg^{(k)}(t)=\sum_{Q,p}\bigl(\Wnorm(Q,p,t)^2\bigr)^k ,
\qquad \lim_{k\to1/2}\Neg^{(k)}(t)=\Neg(t) .
\label{eq:replica_functional}
\end{equation}
\end{definition}

\noindent For $k>1/2$ the summand is a smooth function of $\Wnorm$, so $\Neg^{(k)}$ may be
differentiated term by term; sending $k\to\tfrac12$ afterwards restores $|\Wnorm|$.

\begin{proposition}[Negativity rate]
\label{prop:neg_rate}
The rate of the negativity is
\begin{equation}
\NegRate(t)\equiv\dot{\Neg}(t)
=\sum_{Q,p}\sign\!\bigl(\Wnorm\bigr)\,\partial_t\Wnorm
=\frac{\mathrm{d}}{\mathrm{d}t}\!\left(\frac{1}{\Psurv(t)}\right)
-2\!\sum_{\Omega_-}\partial_t\Wnorm(Q,p,t) ,
\label{eq:neg_rate_final}
\end{equation}
where $\Omega_-=\{(Q,p):\Wnorm<0\}$. When the distribution has conserved total weight---as does
the normalised-state $W$, for which $\sum_{Q,p}W=1$---the first term is absent and the rate is
the pure inward flux $\NegRate_{\mathrm{raw}}=-2\sum_{\Omega_-}\partial_t W$.
\end{proposition}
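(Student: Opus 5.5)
The plan is to obtain the rate by differentiating the replica functional~\eqref{eq:replica_functional} at fixed $k>1/2$, where every summand is smooth, and only afterwards send $k\to\tfrac12$. Term by term,
\[
\dot\Neg^{(k)}(t)=\sum_{Q,p}2k\,\bigl(\Wnorm^2\bigr)^{k-1}\Wnorm\,\partial_t\Wnorm .
\]
As $k\to\tfrac12$, the factor $2k(\Wnorm^2)^{k-1}\Wnorm$ converges pointwise to $\sign(\Wnorm)$ wherever $\Wnorm\neq0$. On the zero set it is bounded by $2k|\Wnorm|^{2k-1}$, which tends to $0$ for $k>1/2$, and is consistent with $\sign(0)=0$. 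This gives the first equality $\NegRate=\sum_{Q,p}\sign(\Wnorm)\,\partial_t\Wnorm$.

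Two points justify exchanging $\lim_{k\to1/2}$ with $\mathrm{d}/\mathrm{d}t$ and with the sum. First, the sum over $(Q,p)$ is finite for fixed $D$; if $D\to\infty$ we use the geometric envelope $|A|^{2Q}Q^{2\Delta}$ of~\eqref{eq:wigner_bessel} as a dominating bound, uniform on compact $t$-intervals with $|A|<1$. Second, the derivatives $\dot\Neg^{(k)}$ converge uniformly in $t$ away from times at which some $\Wnorm(Q,p,t)$ has a degenerate zero. Together these let us identify the limit of $\dot\Neg^{(k)}$ with $\dot\Neg$.

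The main obstacle is that $\Wnorm(Q,p,\cdot)$ may vanish on the moving zero set of $J_\Delta(Z)$, where $|\Wnorm|$ is not differentiable. We treat this as follows:
\begin{itemize}
\item At a transversal zero ($\partial_t\Wnorm\neq0$), $|\Wnorm|$ has one-sided derivatives $\pm\partial_t\Wnorm$. The set of such times is discrete, so $\NegRate$ holds as an almost-everywhere derivative, or equivalently in the sense of distributions.
\item At a non-transversal zero, $\partial_t\Wnorm=0$ and the summand contributes nothing whatever value is assigned to $\sign$.
\end{itemize}
We will state that the identity holds for all $t$ outside this discrete set and, in the integrated sense, everywhere.

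For the second equality, split the sum over the sets $\Omega_+=\{(Q,p):\Wnorm>0\}$, $\Omega_-=\{(Q,p):\Wnorm<0\}$, and the zero set. The zero set contributes nothing. Writing $\sum_{\Omega_+}=\sum_{\mathrm{all}}-\sum_{\Omega_-}-\sum_{\{\Wnorm=0\}}$ gives
\[
\NegRate=\sum_{Q,p}\partial_t\Wnorm-2\sum_{\Omega_-}\partial_t\Wnorm .
\]
By~\eqref{eq:Wnorm_total}, $\sum_{Q,p}\partial_t\Wnorm=\frac{\mathrm{d}}{\mathrm{d}t}\bigl(1/\Psurv(t)\bigr)$, which yields~\eqref{eq:neg_rate_final}.

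For the normalized state, the same argument applies with $\Wnorm$ replaced by $W$. Since $\sum_{Q,p}W=\mathrm{Tr}\,\rho=1$ for all $t$, the total-derivative term vanishes and only the inward flux $-2\sum_{\Omega_-}\partial_t W$ remains.
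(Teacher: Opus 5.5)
Your proposal is correct and follows the paper's own argument: differentiate the replica functional term by term, send $k\to\tfrac12$ to obtain $\sign(\Wnorm)\,\partial_t\Wnorm$, split into $\Omega_\pm$, and use $\sum_{Q,p}\Wnorm=1/\Psurv$. Your handling of zero crossings, with the identity holding off the discrete set of transversal-zero times, is a more careful version of the paper's ``measure zero'' remark. One small slip: $\dot\Neg^{(k)}$ fails to converge uniformly in any neighbourhood of a \emph{transversal} zero, because the limit $\sign(\Wnorm)\,\partial_t\Wnorm$ jumps there; near degenerate zeros it does converge uniformly. This is harmless, since you exclude the transversal-zero times anyway and your pointwise argument covers the rest.
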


\begin{proof}
Differentiating~\eqref{eq:replica_functional} term by term,
$\partial_t(\Wnorm^2)^k=2k(\Wnorm^2)^{k-1}\Wnorm\,\partial_t\Wnorm\to\sign(\Wnorm)\,\partial_t\Wnorm$
as $k\to\tfrac12$ at every point with $\Wnorm\neq0$; the zero set is of measure zero and
$\partial_t\Wnorm$ is finite there. This gives the first equality. Splitting phase space into
$\Omega_\pm$ where $\Wnorm\gtrless0$ and writing
$\sum_{\Omega_+}\partial_t\Wnorm=\sum_{Q,p}\partial_t\Wnorm-\sum_{\Omega_-}\partial_t\Wnorm$,
\[
\NegRate=\sum_{\Omega_+}\partial_t\Wnorm-\sum_{\Omega_-}\partial_t\Wnorm
=\sum_{Q,p}\partial_t\Wnorm-2\sum_{\Omega_-}\partial_t\Wnorm .
\]
Finally $\sum_{Q,p}\partial_t\Wnorm=\partial_t\sum_{Q,p}\Wnorm=\partial_t(1/\Psurv)$
by~\eqref{eq:Wnorm_total}, which vanishes precisely when the total weight is conserved. $\square$
\end{proof}

Equation~\eqref{eq:neg_rate_final} has a transparent reading: for a trace-preserving
(normalised) distribution the negativity changes only by quasi-probability flowing across the
zero locus into the negative cells; for the seed-normalized $\Wnorm$ there is, in addition, the
rescaling term $\partial_t(1/\Psurv)$ that accounts for the growth of the descendant weight. We
now evaluate $\Neg(t)$ in closed form and obtain its rate by differentiation.

\subsection{Phase-Space Integration}
\label{sec:phasespace_integration}

We evaluate~\eqref{eq:negativity_def} directly from the Bessel form~\eqref{eq:wigner_bessel},
\emph{retaining the Bessel variable $Z=Q\tilde\theta$} throughout, with
$\tilde\theta\equiv4\pi p/D+2\varphi_A$. Care is required with the discrete-to-continuum
measure: three factors of two arise, each with a definite lattice origin, and we record them
explicitly.

\paragraph{(i) The momentum sum and the mirror copy.}
As $p$ runs over $\{0,1,\dots,D-1\}$, the angle $\tilde\theta=4\pi p/D+2\varphi_A$ sweeps an
interval of total length $4\pi$---\emph{two} full periods of the phase $e^{im\tilde\theta}$,
because of the doubled frequency $4\pi/D$ inherited from the midpoint parametrisation
$k-\ell=2m$. The Bessel profile $|J_\Delta(Q\tilde\theta)|/|Q\tilde\theta|^\Delta$ is localised
about $\tilde\theta\equiv0\ (\mathrm{mod}\ 2\pi)$, so the momentum sum picks up \emph{two}
mirror copies of the distribution, centred at lattice momenta separated by $D/2$. (This
doubling is directly visible in the exact discrete Wigner function: for a bulk row $Q$ the two
maxima of $|W(Q,p)|$ over $p$ sit half a period apart.) Hence
\begin{equation}
\sum_{p=0}^{D-1}\;\longrightarrow\;\frac{D}{4\pi}\int_0^{4\pi}\mathrm{d}\tilde\theta
\;=\;2\times\frac{D}{4\pi}\int_{\mathrm{one\ period}}\mathrm{d}\tilde\theta .
\label{eq:momentum_measure}
\end{equation}

\paragraph{(ii) The angular integral.}
Within one period, changing variables $Z=Q\tilde\theta$
($\mathrm{d}\tilde\theta=\mathrm{d}Z/Q$) supplies an explicit $1/Q$, and the envelope is
\emph{even} in $Z$, decaying on both sides of the peak:
\begin{equation}
\int_{\mathrm{one\ period}}\mathrm{d}\tilde\theta\,\frac{|J_\Delta(Z)|}{|Z|^\Delta}
=\frac1Q\int_{-\infty}^{\infty}\mathrm{d}Z\,\frac{|J_\Delta(Z)|}{|Z|^\Delta}
=\frac{2c_\Delta}{Q},
\qquad
c_\Delta\equiv\int_0^\infty\frac{|J_\Delta(Z)|}{Z^\Delta}\,\mathrm{d}Z .
\label{eq:cDelta}
\end{equation}
Here $c_\Delta<\infty$ for every $\Delta>1/2$: the integrand is regular at $Z=0$
($\to2^{-\Delta}/\Gamma(\Delta+1)$) and decays as $Z^{-\Delta-1/2}$ at large $Z$. Two remarks
are in order. First, the angular integral is manifestly \emph{not} $Q$-independent; the
large-$Z$ (Hankel) form alone would produce a spurious divergence at $\tilde\theta\to0$,
exactly the region $Z\lesssim1$ where that form is invalid and which supplies the $1/Q$.
Second, this factor reduces the radial power from $Q^{2\Delta}$ to $Q^{2\Delta-1}$---the
correction responsible for the saturation derived below.

\paragraph{(iii) The radial sum.}
The midpoint coordinate $Q=\tfrac12(k+\ell)$ lives on the \emph{half-integer} lattice
$\{0,\tfrac12,1,\tfrac32,\dots\}$ with spacing $\tfrac12$, so
\begin{equation}
\sum_{Q}\;\longrightarrow\;2\int_0^\infty\mathrm{d}Q .
\label{eq:radial_measure}
\end{equation}

\paragraph{Assembly.}
Combining~\eqref{eq:momentum_measure}--\eqref{eq:radial_measure} with the Bessel
form~\eqref{eq:wigner_bessel} and $|A|^{2Q}=e^{-2Q|\ln|A||}$,
\begin{align}
\Neg(t)
&\approx
2\int_0^\infty\!\mathrm{d}Q\;\Bigl(\frac{D}{4\pi}\Bigr)\,
\frac{\pi\,Q^{2\Delta}}{D\,2^{\Delta-1}\Gamma(\Delta)}\,
e^{-2Q|\ln|A||}\times\frac{4c_\Delta}{Q}
\notag\\[2pt]
&=\frac{2c_\Delta}{2^{\Delta-1}\Gamma(\Delta)}
\int_0^\infty\mathrm{d}Q\,Q^{2\Delta-1}e^{-2Q|\ln|A||}
=\frac{2c_\Delta}{2^{\Delta-1}\Gamma(\Delta)}\,
\frac{\Gamma(2\Delta)}{\bigl(2|\ln|A||\bigr)^{2\Delta}} ,
\label{eq:neg_start}
\end{align}
so that
\begin{equation}
\boxed{\;\Neg(t)\approx\frac{c_\Delta\,\Gamma(2\Delta)}{2^{\Delta-2}\,\Gamma(\Delta)}\,
\frac{1}{\bigl(2|\ln|A(t)||\bigr)^{2\Delta}}
\;=\;\frac{2^{\Delta+1}}{\sqrt{\pi}}\,\Gamma\!\Bigl(\Delta+\tfrac12\Bigr)\,c_\Delta\,
\frac{1}{\bigl(2|\ln|A(t)||\bigr)^{2\Delta}}\; ,}
\label{eq:neg_full}
\end{equation}
the second form following from Legendre duplication
$\Gamma(2\Delta)=2^{2\Delta-1}\pi^{-1/2}\Gamma(\Delta)\Gamma(\Delta+\tfrac12)$. Because the
survival factor was removed at the level of the amplitudes~\eqref{eq:chi_def}, no cancellation
occurs here and the seed-normalized negativity is a genuinely growing function of time---in
contrast to the raw negativity, to which we turn below for a consistency check.

\subsection{Growth of the Normalized Negativity and Saturation of the Raw Negativity}
\label{sec:growth}

Introduce the late-time variable
\begin{equation}
\varepsilon_0\equiv\sin^2\!\left(\tfrac{2\pi\varepsilon}{\beta}\right),\qquad
\xi(t)\equiv1-|A(t)|^2=\frac{\varepsilon_0}{S^2+\varepsilon_0},\qquad
S\equiv\sinh(\pi t/\beta),
\label{eq:xi_def}
\end{equation}
so that $2|\ln|A||=-\ln(1-\xi)\simeq\xi\simeq\varepsilon_0/S^2$ at late times. Then the
seed-normalized negativity~\eqref{eq:neg_full} grows as
\begin{equation}
\boxed{\;\Neg(t)\;\simeq\;\frac{c_\Delta\,\Gamma(2\Delta)}{2^{\Delta-2}\Gamma(\Delta)}\,
\frac{\sinh^{4\Delta}(\pi t/\beta)}{\sin^{4\Delta}(2\pi\varepsilon/\beta)}\; ,}
\label{eq:neg_growth}
\end{equation}
with rate
\begin{equation}
\boxed{\;\NegRate(t)\;\simeq\;\frac{c_\Delta\,\Gamma(2\Delta)}{2^{\Delta-2}\Gamma(\Delta)}\,
\frac{4\Delta\pi}{\beta}\,
\frac{\sinh^{4\Delta-1}(\pi t/\beta)\cosh(\pi t/\beta)}{\sin^{4\Delta}(2\pi\varepsilon/\beta)}\; .}
\label{eq:neg_rate_growth}
\end{equation}
Consistency with the normalised-state picture is immediate: by~\eqref{eq:negativity_def} the
raw negativity is $\Neg_{\mathrm{raw}}(t)=\Psurv(t)\,\Neg(t)$, and with
$\Psurv=|\psi_0|^2=\xi^{2\Delta}$,
\begin{equation}
\Neg_{\mathrm{raw}}(t)
=\frac{c_\Delta\Gamma(2\Delta)}{2^{\Delta-2}\Gamma(\Delta)}
\left(\frac{\xi}{-\ln(1-\xi)}\right)^{2\Delta}
\;\xrightarrow[\;\xi\to0\;]{}\;\frac{c_\Delta\Gamma(2\Delta)}{2^{\Delta-2}\Gamma(\Delta)} ,
\label{eq:raw_saturates}
\end{equation}
so the raw, normalised-state negativity \emph{saturates} to a constant, in agreement with the
$O(1)$ late-time constancy of the state Wigner negativity found in the concurrent
work~\cite{Balasubramanian:2026wn}. The growth
\eqref{eq:neg_growth} is therefore carried entirely by the seed normalization, exactly as
intended. Note that for large $\Delta$ the saturation value $\Neg_\infty$ of the raw negativity
tends to $2$, i.e.\ $\lim_{\Delta \to \infty} \Neg_\infty = 2$. Now we make the saturation
statement precise.

\begin{figure}[h]
    \centering
    \includegraphics[width=0.85\linewidth]{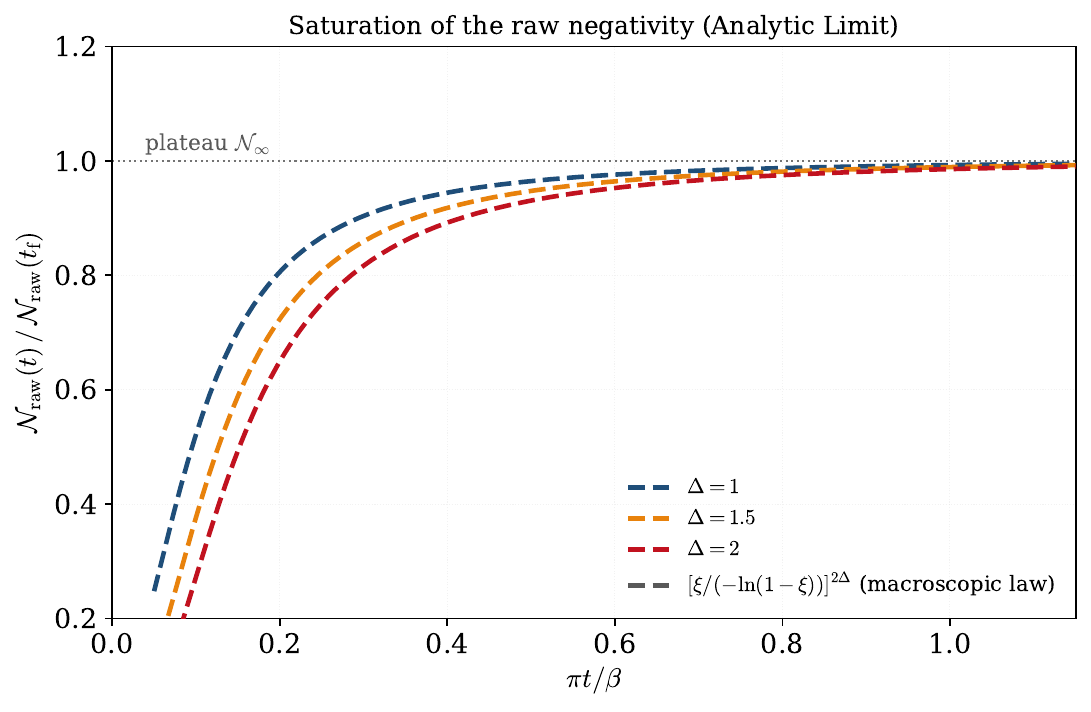}
    \caption{\textbf{Saturation of the raw negativity (Analytic Limit).} 
Time evolution of the macroscopic theoretical law $[\xi/(-\ln(1-\xi))]^{2\Delta}$ (dashed curves, eq.~\eqref{eq:raw_saturates}) evaluated for $\Delta\in\{1,3/2,2\}$ with $\varepsilon=0.1$ and $\beta=2\pi$. The curves are normalized by their late-time saturation value, effectively setting the theoretical maximum to unity (dotted line). As $t \to \infty$ ($\xi \to 0$), the geometric spreading mathematically cancels the return amplitude decay, causing the raw negativity to strictly saturate to the constant plateau $\Neg_\infty$. This visualizes the asymptotic behavior described in Proposition~\ref{prop:plateau} and demonstrates that the raw negativity is inherently bounded at late times, necessitating the seed-normalized diagnostic to probe continued infall.}
    \label{fig:raw_saturation}
\end{figure}
\newpage

\begin{proposition}[Saturation of the raw negativity]
\label{prop:plateau}
Within the macroscopic regime ($Q\gg1$, $\langle\NKrylov\rangle\ll D$) the raw Krylov--Wigner
negativity approaches a constant plateau,
\begin{equation}
\Neg_{\mathrm{raw}}(t)\;\xrightarrow[t\to\infty]{}\;
\Neg_\infty\equiv\frac{c_\Delta\,\Gamma(2\Delta)}{2^{\Delta-2}\Gamma(\Delta)}
=\frac{2^{\Delta+1}}{\sqrt{\pi}}\,\Gamma\!\Bigl(\Delta+\tfrac12\Bigr)c_\Delta ,
\qquad
c_\Delta=\int_0^\infty\frac{|J_\Delta(Z)|}{Z^\Delta}\,\mathrm{d}Z ,
\label{eq:plateau_value}
\end{equation}
a finite number that depends only on the conformal dimension $\Delta$ and is independent of the
temperature $\beta$, the regulator $\varepsilon$, and the Hilbert-space dimension $D$. It is
finite for every $\Delta>1/2$ and decreases monotonically with $\Delta$. The plateau is
approached from below,
\begin{equation}
\Neg_{\mathrm{raw}}(t)=\Neg_\infty\Bigl(1-\Delta\,\xi(t)+O(\xi^2)\Bigr)
=\Neg_\infty\Bigl(1-\Delta\,\frac{\varepsilon_0}{\sinh^2(\pi t/\beta)}+\cdots\Bigr),
\label{eq:plateau_approach}
\end{equation}
i.e.\ exponentially on the thermal timescale, with relative deficit
$\sim4\Delta\,\varepsilon_0\,e^{-2\pi t/\beta}$.
\end{proposition}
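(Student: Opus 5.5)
The plan is to start from the closed form \eqref{eq:neg_full} and the identity $\Neg_{\mathrm{raw}}=\Psurv\,\Neg$ from \eqref{eq:negativity_def}, with $\Psurv=\xi^{2\Delta}$ from \eqref{eq:survival_prob} and \eqref{eq:xi_def}. Together these give the exact (within the macroscopic approximation) expression
\begin{equation*}
\Neg_{\mathrm{raw}}(t)=\Neg_\infty\left(\frac{\xi}{-\ln(1-\xi)}\right)^{2\Delta},
\end{equation*}
and the statement reduces to properties of the elementary function $g(\xi)=\xi/(-\ln(1-\xi))$ on $0<\xi<1$, together with properties of the constant $c_\Delta$.

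First, for the plateau and its approach, I will expand $-\ln(1-\xi)=\xi(1+\xi/2+O(\xi^2))$. This gives $g=1-\xi/2+O(\xi^2)$, and hence $g^{2\Delta}=1-\Delta\xi+O(\xi^2)$, which is \eqref{eq:plateau_approach}. Since $-\ln(1-\xi)>\xi$, we have $g<1$, so the approach is from below. Substituting $\xi=\varepsilon_0/(S^2+\varepsilon_0)\simeq\varepsilon_0/S^2$ and $S^2\simeq e^{2\pi t/\beta}/4$ gives the deficit $4\Delta\varepsilon_0e^{-2\pi t/\beta}$.

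Independence of $\beta$, $\varepsilon$ and $D$ is then visible directly: the $D$ factors cancelled in \eqref{eq:neg_start}, and $\beta,\varepsilon$ enter only through $\xi\to0$. The validity caveat ($Q\gg1$, $\langle\NKrylov\rangle\ll D$) is inherited from the Bessel form \eqref{eq:wigner_bessel}. I will note that the dominant $Q\sim\Delta/|\ln|A||$ grows with $t$, so the limit is taken within the window before the Heisenberg plateau.

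Second, for finiteness of $c_\Delta$, I will reuse the two asymptotic facts already stated after \eqref{eq:cDelta}. Near $0$, $J_\Delta(Z)/Z^\Delta\to2^{-\Delta}/\Gamma(\Delta+1)$, so the integrand is bounded. At large $Z$, the Hankel asymptotics give $|J_\Delta|\le C Z^{-1/2}$, so the integrand is $O(Z^{-\Delta-1/2})$, which is integrable exactly when $\Delta>1/2$.

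The main obstacle is the monotonic decrease of $\Neg_\infty=\frac{2^{\Delta+1}}{\sqrt\pi}\Gamma(\Delta+\tfrac12)c_\Delta$ in $\Delta$. Here the prefactor grows while $c_\Delta$ shrinks, so a term-by-term argument fails. My first attempt will be to rewrite $\Neg_\infty$ as a normalized ratio. Using $\int_0^\infty J_\Delta(Z)Z^{-\Delta}\,dZ=\sqrt\pi/(2^\Delta\Gamma(\Delta+\tfrac12))$, I get
\begin{equation*}
\Neg_\infty=2\,\frac{\int_0^\infty|J_\Delta|Z^{-\Delta}\,dZ}{\int_0^\infty J_\Delta Z^{-\Delta}\,dZ}
=2\left(1+\frac{2\int_{J_\Delta<0}|J_\Delta|Z^{-\Delta}\,dZ}{\int_0^\infty J_\Delta Z^{-\Delta}\,dZ}\right).
\end{equation*}
This also explains the limit $2$ as $\Delta\to\infty$. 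Monotonicity then amounts to showing that the relative weight of the negative lobes, which lie beyond the first zero $j_{\Delta,1}$, decreases with $\Delta$.

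I would try to establish this using two inputs: the increase of $j_{\Delta,1}$ in $\Delta$, and the normalized profile $\Gamma(\Delta+1)(2/Z)^\Delta J_\Delta(Z)$. The second is the Fourier transform of the probability density $\propto(1-x^2)^{\Delta-1/2}$, whose concentration increases with $\Delta$. If a clean comparison does not emerge from this, I would fall back on a Sonine-type integral relation linking $J_{\Delta+1}$ and $J_\Delta$. Failing that, I would verify the monotonicity numerically for the relevant range and state it as such.
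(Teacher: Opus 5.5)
Your treatment of everything the paper's proof actually covers follows the paper's own argument. You write $\Neg_{\mathrm{raw}}=\Neg_\infty(\xi/u)^{2\Delta}$ with $u=-\ln(1-\xi)$ and expand $u=\xi+\tfrac12\xi^2+\dots$ to obtain $1-\Delta\xi$. You then insert $\xi\simeq\varepsilon_0/\sinh^2(\pi t/\beta)$, and you control $c_\Delta$ at $Z\to0$ and along the Hankel tail. Your inequality $-\ln(1-\xi)>\xi$ makes ``from below'' hold for all $0<\xi<1$ rather than only perturbatively, which is a small improvement.

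The one claim your proposal does not prove is the monotone decrease of $\Neg_\infty$ in $\Delta$. The paper's proof does not address it at all either, so here you are on your own.

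Your reduction $\Neg_\infty=2c_\Delta/I_\Delta$, with $I_\Delta=\int_0^\infty J_\Delta Z^{-\Delta}\,\mathrm{d}Z=\sqrt{\pi}/(2^\Delta\Gamma(\Delta+\tfrac12))$, is correct, and it also gives $\Neg_\infty\ge2$ immediately. The ``concentration'' route will not close the argument by itself. Pushing $j_{\Delta,1}$ outward says nothing about the size of the later lobes relative to $I_\Delta$.

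Your Sonine fallback does work, but only if you use Sonine's first finite integral with an arbitrary order gap, not just the step $\Delta\to\Delta+1$. Set $f_\Delta(z)\equiv J_\Delta(z)z^{-\Delta}$ and $\Delta'=\Delta+\nu+1$ with $\nu>-1$. Then
\begin{equation*}
f_{\Delta'}(z)=\frac{1}{2^\nu\Gamma(\nu+1)}\int_0^{\pi/2}f_\Delta(z\sin\theta)\,\sin^{2\Delta+1}\theta\,\cos^{2\nu+1}\theta\,\mathrm{d}\theta ,
\end{equation*}
so $f_{\Delta'}$ is a positive-weight superposition of dilates of $f_\Delta$. The ratio $\int|f|/\int f$ is invariant under $z\to sz$, which lets you compare the two orders directly:
\begin{itemize}
\item The triangle inequality gives $c_{\Delta'}\le\kappa\,c_\Delta$, with $\kappa=\Gamma(\Delta+\tfrac12)/[2^{\nu+1}\Gamma(\Delta'+\tfrac12)]$.
\item Fubini, justified by that same absolute bound for $\Delta>1/2$, gives $I_{\Delta'}=\kappa\,I_\Delta$ exactly.
\end{itemize}
Hence $\Neg_\infty(\Delta')\le\Neg_\infty(\Delta)$. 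The inequality is strict because, for $z>j_{\Delta,1}$, $f_\Delta(z\sin\theta)$ changes sign as $\theta$ varies. With this argument no numerical fallback is needed.
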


\begin{proof}
Setting $u\equiv-\ln(1-\xi)=2|\ln|A||$, equation~\eqref{eq:raw_saturates} reads
$\Neg_{\mathrm{raw}}=\Neg_\infty(\xi/u)^{2\Delta}$. Since $u=\xi+\tfrac12\xi^2+O(\xi^3)$,
$\xi/u=1-\tfrac12\xi+O(\xi^2)$ and $(\xi/u)^{2\Delta}=1-\Delta\xi+O(\xi^2)$, which
gives~\eqref{eq:plateau_approach} on inserting $\xi\simeq\varepsilon_0/\sinh^2(\pi t/\beta)$;
the limit $\xi\to0$ yields~\eqref{eq:plateau_value}. The second form of $\Neg_\infty$ follows
from Legendre duplication. Convergence of $c_\Delta$ is controlled at the two ends: near $Z=0$
the integrand tends to the constant $2^{-\Delta}/\Gamma(\Delta+1)$, while at large $Z$,
$|J_\Delta(Z)|/Z^\Delta\sim\sqrt{2/\pi}\,|\cos(\cdots)|\,Z^{-\Delta-1/2}$, integrable precisely
for $\Delta>1/2$. $\square$
\end{proof}

\paragraph{Why a spreading distribution has bounded negativity.}
The saturation is not an accident of the integral but a balance between two competing effects.
As $|A|\to1$ the mean Krylov level
$\langle\NKrylov\rangle=2\Delta|A|^2/(1-|A|^2)\simeq2\Delta/\xi$ diverges, so the packet
occupies an ever larger region of the chain; at the same time the peak height of $W$, set by
the survival probability $\Psurv=|\psi_0|^2=\xi^{2\Delta}$, vanishes. In the negativity these
effects are governed by the \emph{same} power of $\xi$: the radial spread contributes
$\int_0^\infty Q^{2\Delta-1}e^{-2Q|\ln|A||}\,\mathrm{d}Q=
\Gamma(2\Delta)\,(2|\ln|A||)^{-2\Delta}\propto\xi^{-2\Delta}$, which grows exactly as fast as
the envelope $\Psurv\propto\xi^{2\Delta}$ decays. The total absolute quasi-probability is their
product, and the two powers cancel. Physically, the quasi-probability spreads to fill an ever
larger region of phase space while thinning in height, holding the $1$-norm of the signed
distribution fixed once the state has delocalised, while the discrete-Wigner counterpart of a
wavepacket whose non-classicality stops growing after it has spread over the phase space
available to it.

\begin{remark}[Diagnostic status and the meaning of the matching]
\label{rem:diagnostic}
Two points should be kept in mind, following Remark~\ref{rem:whatisWnorm}. (i) $\Neg$ is the
raw negativity divided by the survival probability; it is a rescaled diagnostic, not the Wigner
negativity of a normalised state. (ii) Comparing~\eqref{eq:neg_full} with
$1/\Psurv=(1-|A|^2)^{-2\Delta}$, the two share the same leading late-time scaling
$\sinh^{4\Delta}$ and differ only at subleading order
($2|\ln|A||=-\ln(1-\xi)=\xi+\tfrac12\xi^2+\cdots$ versus $1-|A|^2=\xi$). Thus the leading
growth of $\Neg$ coincides with that of the inverse survival probability, and the matching
identified in section~\ref{sec:matching} is, at leading order, a statement about the
survival-probability decay exponent $2\Delta$ rather than about the magnitude of the magic. The
magic content enters through the coefficient $\Neg_\infty$ of~\eqref{eq:plateau_value} and the
subleading structure. We therefore present the matching as a structural relation between
boundary two-point data, not as a dynamical coincidence of magic with tidal effects.
\end{remark}

\begin{figure}[h]
    \centering
    \includegraphics[width=0.75\linewidth]{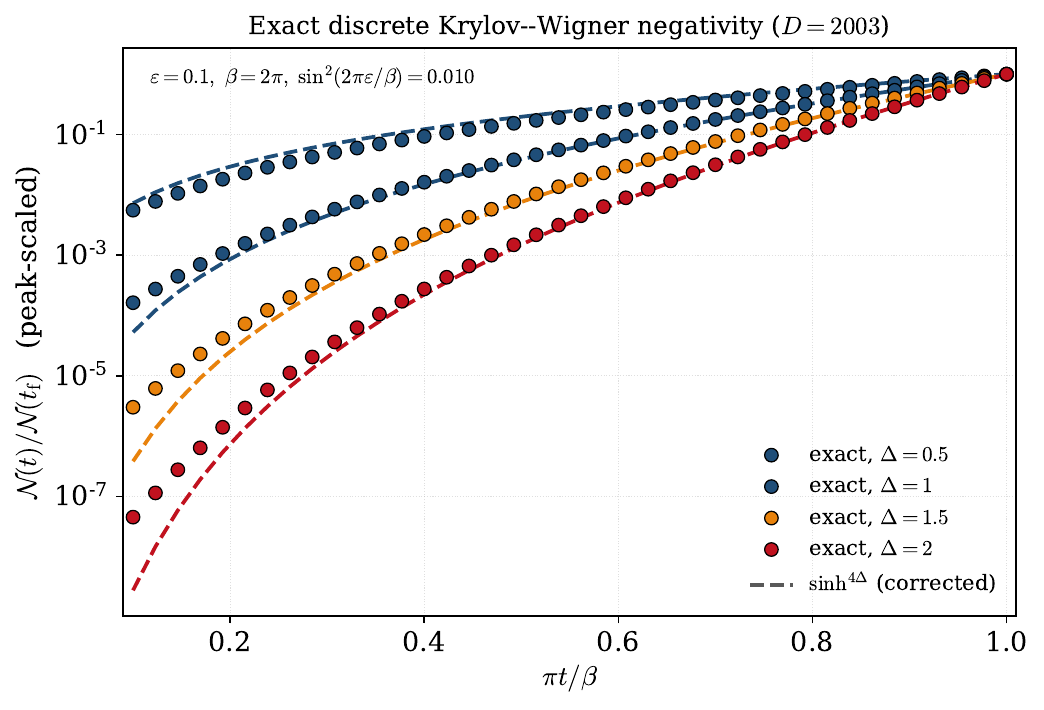}
    \caption{\textbf{Growth of the seed-normalized negativity (log--log).}
    The exact discrete seed-normalized negativity $\Neg=\Neg_{\mathrm{raw}}/\Psurv$ (markers),
    computed from the Wootters discrete Wigner function on the
    $\mathbb{Z}_D\times\mathbb{Z}_D$ phase space of prime dimension $D=2003$
    ($\varepsilon=0.1$, $\beta=2\pi$), plotted against $\sinh(\pi t/\beta)$ on log--log axes
    and divided by its value at $\pi t/\beta=1$ (common footing). On these axes the continuum
    law $\Neg\sim\sinh^{4\Delta}$~\eqref{eq:neg_growth} is a straight line of slope $4\Delta$
    (dashed); the exact markers follow it (fitted slopes $3.91,\,5.79,\,7.74$ for
    $\Delta=1,\tfrac32,2$ against $4,6,8$), confirming that the seed-normalized negativity
    \emph{grows} as a power law---in contrast to the raw negativity of
    figure~\ref{fig:raw_saturation}, which saturates. The window is kept within the validity
    regime of the macroscopic description ($\langle\NKrylov\rangle\ll D$, truncation weight
    $<10^{-3}$). The marginal case $\Delta=\tfrac12$ (open squares), where
    $c_\Delta$ diverges logarithmically, is shown for reference and carries a slow
    log-enhancement above the pure $\sinh^{2}$ slope.}
    \label{fig:peak_scaled_negativity}
\end{figure}

\section{Matching and the Critical Dimension}
\label{sec:matching}

We now compare the normalized negativity rate~\eqref{eq:neg_rate_growth} with the growth rate of
the Krylov wavepacket variance. Both are functions of the single scaling variable $\xi(t)$
of~\eqref{eq:xi_def}, so the comparison reduces to matching two power laws.

\paragraph{Exact variance rate.}
From Lemma~\ref{lem:NB}, $\mathrm{Var}(\NKrylov)=2\Delta(1-\xi)/\xi^2$, whose exact time
derivative is, with $C\equiv\cosh(\pi t/\beta)$,
\begin{equation}
\boxed{\;
\frac{\mathrm{d}}{\mathrm{d}t}\mathrm{Var}(\NKrylov)
=\frac{4\Delta\pi}{\beta}\cdot
\frac{\sinh(\pi t/\beta)\cosh(\pi t/\beta)\bigl[2\sinh^2(\pi t/\beta)+\varepsilon_0\bigr]}
{\varepsilon_0^2}\; . }
\label{eq:var_rate_exact}
\end{equation}
This is exact for all $t>0$, $\varepsilon>0$. For $S^2\gg\varepsilon_0$ the bracket reduces to
$2S^2$, giving the late-time asymptotic
\begin{equation}
\frac{\mathrm{d}}{\mathrm{d}t}\mathrm{Var}(\NKrylov)
\;\approx\;\frac{8\Delta\pi}{\beta}\,
\frac{\sinh^3(\pi t/\beta)\cosh(\pi t/\beta)}{\sin^4(2\pi\varepsilon/\beta)} .
\label{eq:var_rate_late}
\end{equation}

\paragraph{Matching through the single scaling variable.}
Both late-time rates~\eqref{eq:neg_rate_growth} and~\eqref{eq:var_rate_late} are powers of
$\xi\simeq\varepsilon_0/S^2$, so their $t$- and $\varepsilon$-exponents are locked together:
\begin{center}
\renewcommand{\arraystretch}{1.6}
\begin{tabular}{lccc}
\toprule
Quantity & $t$-scaling & $\varepsilon$-scaling & Match condition\\
\midrule
$\NegRate(t)$
 & $\sinh^{4\Delta-1}(\pi t/\beta)\cosh(\pi t/\beta)$
 & $\sin^{-4\Delta}(2\pi\varepsilon/\beta)$ & \\
$\dfrac{\mathrm{d}\,\mathrm{Var}}{\mathrm{d}t}$
 & $\sinh^{3}(\pi t/\beta)\cosh(\pi t/\beta)$
 & $\sin^{-4}(2\pi\varepsilon/\beta)$ & \\[4pt]
\midrule
\textit{Equality requires} & $4\Delta-1=3$ & $4\Delta=4$ & $\Delta=1$\\
\bottomrule
\end{tabular}
\end{center}

\noindent Both rows give the same condition
\begin{equation}
4\Delta-1=3\;\Longrightarrow\;\Delta=1 ,
\label{eq:matching_correct}
\end{equation}
and they are not independent tests, since the $t$- and $\varepsilon$-exponents are tied together by
$\xi$. At this value the two rates share the
functional form $\sinh^3(\pi t/\beta)\cosh(\pi t/\beta)/\sin^4(2\pi\varepsilon/\beta)$, and we
obtain the \emph{late-time asymptotic} proportionality
\begin{equation}
\boxed{\;
\NegRate(t)\;\simeq\;\mathcal{A}(\beta,\varepsilon)\,
\frac{\mathrm{d}}{\mathrm{d}t}\mathrm{Var}(\NKrylov) ,
\qquad \Delta=1,\quad \sinh^2(\pi t/\beta)\gg\sin^2(2\pi\varepsilon/\beta) , }
\label{eq:rate_equality}
\end{equation}
with $\mathcal{A}(\beta,\varepsilon)$ a $t$-independent constant. Away from $\Delta=1$ the ratio
$\NegRate/(\mathrm{d}\,\mathrm{Var}/\mathrm{d}t)\propto\sinh^{4\Delta-4}(\pi t/\beta)$ is not
constant, so the proportionality holds if and only if $\Delta=1$.

\begin{remark}[Asymptotic, not exact]
\label{rem:asymptotic}
Relation~\eqref{eq:rate_equality} holds in the late-time regime
$\sinh^2(\pi t/\beta)\gg\varepsilon_0$. The variance rate~\eqref{eq:var_rate_exact} is exact and
retains the factor $2\sinh^2(\pi t/\beta)+\varepsilon_0$, whereas the negativity rate comes from
the macroscopic/Hankel form; the ratio reaches its constant plateau only once
$\sinh^2\gg\varepsilon_0$, and is not constant at finite time. No exact finite-time equality is
claimed.
\end{remark}

\paragraph{The general negativity--variance relation and the critical dimension.}
The matching admits a compact statement valid for \emph{every} dimension. At late times
$\Neg\propto\xi^{-2\Delta}$ by~\eqref{eq:neg_growth}, while
$\mathrm{Var}(\NKrylov)=2\Delta(1-\xi)/\xi^{2}\to2\Delta/\xi^{2}$, so eliminating the common
scaling variable $\xi$ gives
\begin{equation}
\boxed{\;\Neg(t)\;\propto\;\bigl[\mathrm{Var}(\NKrylov)\bigr]^{\Delta}\;,
\qquad \sinh^2(\pi t/\beta)\gg\varepsilon_0,\quad \Delta>\tfrac12\;:}
\label{eq:general_relation}
\end{equation}
the normalized negativity is a fixed power of the second moment of the Krylov wavepacket at any
dimension. The value $\Delta=1$ is then distinguished as the \emph{unique} dimension at which
this relation linearizes, $\Neg\propto\mathrm{Var}(\NKrylov)$. Only there do the two rates
coincide, as in~\eqref{eq:rate_equality}, and only there does the negativity rate acquire the
clean geometric reading of section~\ref{sec:tidal_momentum}, $\NegRate\propto\Cplx\,P_\rho$---the
rate of tidal stretching of neighbouring infalling geodesics. Away from $\Delta=1$ the
negativity rate is the tidal rate dressed by the additional factor
$\Cplx^{2\Delta-2}$.\footnote{In bulk language, $\Delta$ throughout this paper is the
holomorphic weight $h$ of the quenching primary; a spinless bulk scalar carries total dimension
$\Delta_{\mathrm{tot}}=2\Delta$, so $\Delta=1$ corresponds to
$m^2\lAdS^2=\Delta_{\mathrm{tot}}(\Delta_{\mathrm{tot}}-2)=0$, a massless scalar. (The
Breitenlohner--Freedman bound~\cite{Breitenlohner:1982jf} is instead saturated at
$\Delta=\tfrac12$ in this convention.) We attach no bulk-stability interpretation to $\Delta=1$;
its significance is precisely the linearization of~\eqref{eq:general_relation}.} That the
higher-moment (negativity) diagnostic singles out this linearization point, while the
first-moment (complexity) diagnostic is insensitive to it, is the structural observation of this
work.

Figures~\ref{fig:var_growth}--\ref{fig:ratio_delta} illustrate~\eqref{eq:rate_equality} using the
exact formulas~\eqref{eq:nb_var_lemma} and~\eqref{eq:var_rate_exact}; with the corrected exponents
the diagnostic ratio scales as $\sinh^{4\Delta-4}(\pi t/\beta)$ and is asymptotically flat only at
$\Delta=1$.

\begin{figure}[t]
\centering
\includegraphics[width=0.6\linewidth]{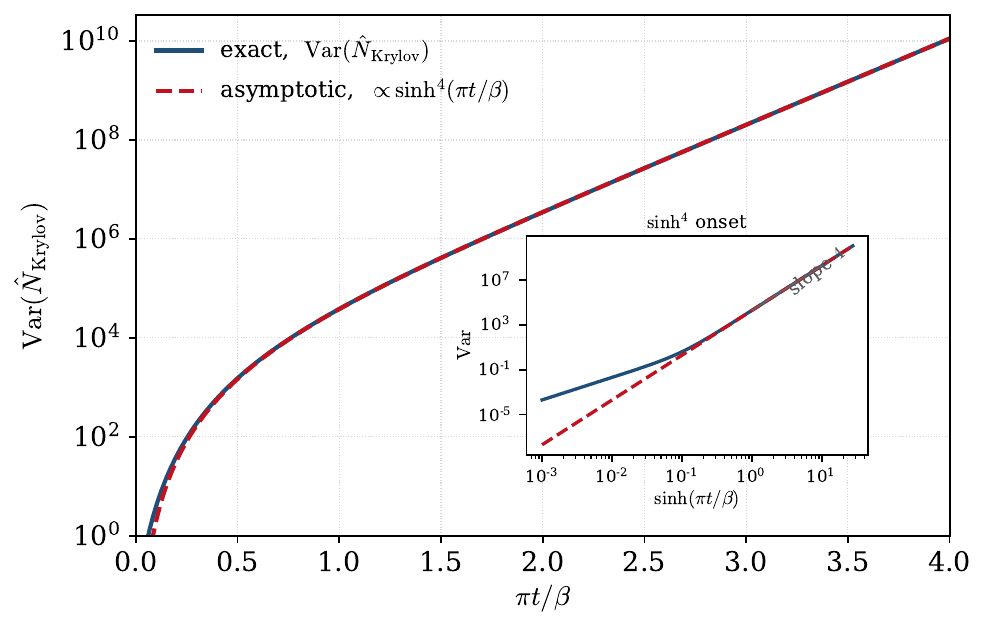}
\caption{\textbf{Exact growth of the Krylov wavepacket variance.}
$\mathrm{Var}(\NKrylov)$ versus $\pi t/\beta$ from the negative-binomial
formula~\eqref{eq:nb_var_lemma} (solid) and its late-time $\sinh^4$ asymptotic (dashed), for
$\sin^2(2\pi\varepsilon/\beta)=0.01$
\emph{ at $\Delta=1$.}}
\label{fig:var_growth}
\end{figure}

\begin{figure}[h]
\centering
\includegraphics[width=0.85\linewidth]{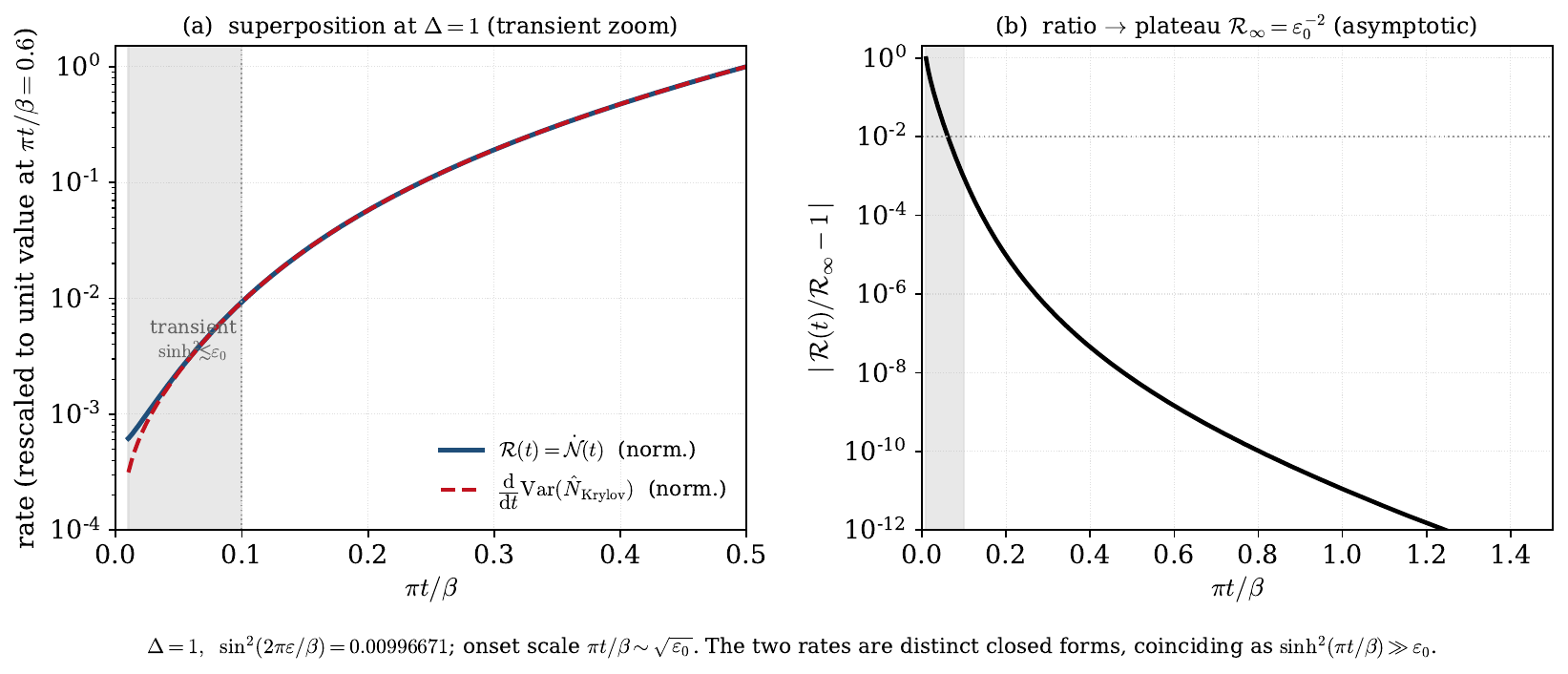}
\caption{\textbf{Superposition of $\NegRate(t)$ and $\mathrm{d}\,\mathrm{Var}/\mathrm{d}t$ at
$\Delta=1$.} \emph{(a)} The normalized negativity rate $\NegRate=\dot{\Neg}$ and the exact
Krylov variance rate $\tfrac{\mathrm{d}}{\mathrm{d}t}\mathrm{Var}(\NKrylov)$, each rescaled to
unit value at the right edge, versus $\pi t/\beta$ (log scale), zoomed to the transient/onset
region for $\sin^2(2\pi\varepsilon/\beta)=0.01$. The grey band marks the transient
$\sinh^2(\pi t/\beta)\lesssim\varepsilon_0$, i.e.\ $\pi t/\beta\lesssim\sqrt{\varepsilon_0}$,
where the two rates differ; they coincide thereafter. The curves are built from \emph{different}
closed forms---$\NegRate$ from the macroscopic-exact negativity
$\Neg\propto[\ln(1+\varepsilon_0/\sinh^2(\pi t/\beta))]^{-2\Delta}$ and
$\tfrac{\mathrm{d}}{\mathrm{d}t}\mathrm{Var}$ from the exact rate~\eqref{eq:var_rate_exact}---so
their agreement is not a tautology. \emph{(b)} Relative deviation of the ratio
$\NegRate/\tfrac{\mathrm{d}}{\mathrm{d}t}\mathrm{Var}$ from its late-time plateau
$\NegRate_\infty=\varepsilon_0^{-2}$ (log scale): it decreases monotonically to zero as
$t\to\infty$, so the proportionality becomes exact in the asymptotic regime
$\sinh^2(\pi t/\beta)\gg\varepsilon_0$ (Remark~\ref{rem:asymptotic}).}
\label{fig:rate_comparison}
\end{figure}

\begin{figure}[h]
\centering
\includegraphics[width=0.55\linewidth]{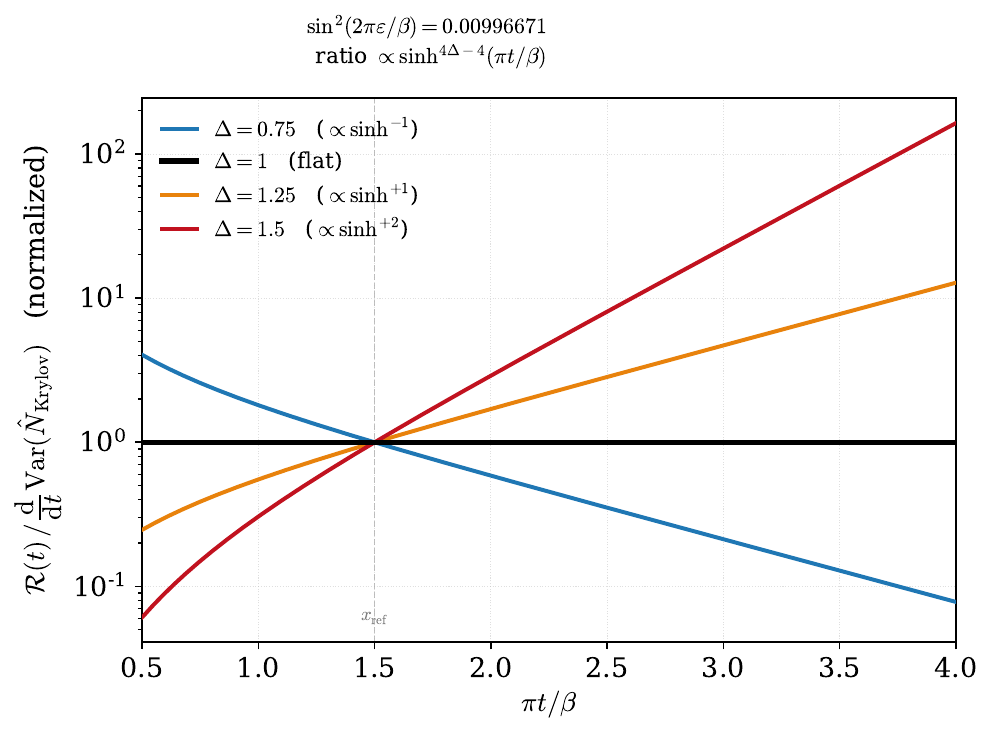}
\caption{\textbf{Diagnostic of the matching condition, with the asymptotics shown explicitly.}
The ratio $\NegRate(t)\big/\tfrac{\mathrm{d}}{\mathrm{d}t}\mathrm{Var}(\NKrylov)$ for several
$\Delta$ at $\sin^2(2\pi\varepsilon/\beta)=0.01$, each normalized to unity at
$x_{\mathrm{ref}}\equiv\pi t/\beta=1.5$ (so that every $\Delta$-dependent prefactor, including
$c_\Delta$ and the $O(1)$ lattice-convention factor, cancels; the plot is convention-independent).
\emph{(a)} The exact ratio (solid), built from the macroscopic-exact negativity
$\Neg\propto[\ln(1+\varepsilon_0/\sinh^2(\pi t/\beta))]^{-2\Delta}$ and the exact variance
rate~\eqref{eq:var_rate_exact}, peels away from its late-time asymptote
$(\sinh(\pi t/\beta)/\sinh x_{\mathrm{ref}})^{4\Delta-4}$ (dashed) at early times and merges onto
it at late times. \emph{(b)} The local logarithmic slope
$\mathrm{d}\ln\NegRate/\mathrm{d}\ln\sinh(\pi t/\beta)$ converges to the constant $4\Delta-4$
(dotted guides). The ratio is thus flat \emph{only} at the critical value $\Delta=1$, where the
negativity--variance relation~\eqref{eq:general_relation} linearizes, decaying for $\Delta<1$
and diverging for $\Delta>1$; the flatness is an asymptotic statement, consistent with
Remark~\ref{rem:asymptotic}.}
\label{fig:ratio_delta}
\end{figure}

\newpage

\section{Negativity Rate as Tidal Momentum}
\label{sec:tidal_momentum}


The matching acquires a sharper bulk reading in the language of the holographic Krylov
dictionary, in which the rate of spread complexity is a proper momentum. We show that the
normalized negativity rate is governed by the \emph{same} proper momentum, dressed by an
explicit factor, and that this recasts the $\Delta=1$ condition as the equality of two bulk
momenta. The relations of this section rest on the exact boundary
identity~\eqref{eq:var_mean_identity} and the dictionary~\eqref{eq:proper_momentum}; they do
not invoke the operator identification of section~\ref{sec:discussion}.

Recall the established first-moment entry~\cite{Caputa:2024}: with
$\mathcal{C}(t)\equiv\langle\hat{N}_{\mathrm{Krylov}}\rangle$,
\begin{equation}
\frac{d\mathcal{C}}{dt}\;\propto\;P_\rho(t) ,
\label{eq:proper_momentum_recall}
\end{equation}
and $\mathcal{C}(t)=\tfrac{2\Delta}{\varepsilon_0}\sinh^2(\pi t/\beta)$ exactly, so
$P_\rho\propto\sinh(\pi t/\beta)\cosh(\pi t/\beta)$.

\paragraph{Variance rate (exact).}
The Krylov occupation is negative-binomial, so the variance is an exact quadratic function of
the mean, $\mathrm{Var}=\mathcal{C}+\mathcal{C}^2/2\Delta$~\eqref{eq:var_mean_identity}.
Differentiating,
\begin{equation}
\frac{d\,\mathrm{Var}(\hat{N}_{\mathrm{Krylov}})}{dt}
= \left(1+\frac{\mathcal{C}}{\Delta}\right)\frac{d\mathcal{C}}{dt}
\;\propto\;\left(1+\frac{\mathcal{C}(t)}{\Delta}\right)P_\rho(t) .
\label{eq:var_rate_momentum}
\end{equation}
This is an \emph{exact} boundary identity: the variance rate is the proper radial momentum
dressed by the calculable factor $1+\mathcal{C}/\Delta$.

\paragraph{Normalized negativity rate.}
From the corrected macroscopic result, $\Neg\propto\sinh^{4\Delta}(\pi t/\beta)$
(section~\ref{sec:growth}), so
$\NegRate\propto\sinh^{4\Delta-1}\cosh$. Using $\mathcal{C}\propto\sinh^2$ and
$P_\rho\propto\sinh\cosh$,
\begin{equation}
\NegRate(t)\;\propto\;\sinh^{4\Delta-2}\cdot(\sinh\cosh)
\;\propto\;\mathcal{C}(t)^{\,2\Delta-1}\,P_\rho(t) .
\label{eq:neg_rate_momentum}
\end{equation}
Thus the negativity rate is the proper momentum dressed by $\mathcal{C}^{2\Delta-1}$, whereas
the variance rate~\eqref{eq:var_rate_momentum} is dressed by $1+\mathcal{C}/\Delta\to
\mathcal{C}/\Delta$ at late times. The two dressings coincide precisely when
$2\Delta-1=1$, i.e.\ $\Delta=1$, reproducing~\eqref{eq:matching_correct} from the
bulk-momentum side. At that dimension,
\begin{equation}
\boxed{\;
\NegRate(t)\;\propto\;\mathcal{C}(t)\,P_\rho(t)
\qquad(\Delta=1)\; ,}
\label{eq:tidal_duality}
\end{equation}
the negativity rate is the product of the proper radial position (encoded in $\mathcal{C}$)
and the proper radial momentum.

\medskip\noindent\textbf{Geometric interpretation: geodesic deviation.}
The structure of~\eqref{eq:tidal_duality} suggests the bulk observable dual to the negativity
rate. The variance is the squared radial spread, $\mathrm{Var}\leftrightarrow(\delta\rho)^2$,
whose rate is $\tfrac{d}{dt}(\delta\rho)^2=2\,\delta\rho\,\dot{(\delta\rho)}$. The separation
$\delta\rho$ between neighbouring infalling geodesics obeys the geodesic-deviation equation
\begin{equation}
\frac{D^2\,\delta\rho}{d\tau^2}=-R^{\rho}_{\ \tau\tau\rho}\,\delta\rho ,
\label{eq:geodesic_deviation_body}
\end{equation}
so the relative (tidal) momentum $\dot{(\delta\rho)}$ grows in proportion to the separation
times the infall rate, and $\delta\rho\,\dot{(\delta\rho)}\propto\mathcal{C}\,P_\rho$ is the
rate of tidal stretching. The complexity rate measures the classical geodesic momentum, while
the normalized negativity rate measures the tidal momentum of geodesic deviation---the
quantum spreading about the classical trajectory. This reading is expressed through the
proper-momentum dictionary alone and is independent of the operator identification, subject to
the caveat of Remark~\ref{rem:diagnostic} on the diagnostic nature of $\Neg$.


\section{Discussion}
\label{sec:discussion}

We have presented a boundary calculation---the seed-normalized Krylov--Wigner distribution, its
total negativity, and the negativity rate---and its interpretation as a second-moment companion to
spread complexity. We summarise the established results, then develop, as a direction for future
work, a speculative bulk string interpretation that retains all of the derivations while separating
them cleanly from the boundary analysis.

\subsection*{Summary of Boundary Results}

\textit{Seed-normalized calculus.}
By fixing the normalization at the level of the amplitudes---dividing out the return amplitude to
form $\chi_n=\psi_n/\psi_0$ and the seed-normalized distribution $\Wnorm=W/\Psurv$
(section~\ref{sec:seed_normalized})---the physically informative object is primary from the start.
Its closed Bessel form~\eqref{eq:wigner_bessel} follows from the $\SU(1,1)$ coherent-state structure,
Stirling asymptotics, and the Euler--Maclaurin formula, and carries no survival factor.

\textit{Negativity and its meaning.}
Evaluating the total negativity directly from the Bessel form, and keeping the Bessel variable
$Z=Q\tilde\theta$ throughout, the radial power is $Q^{2\Delta-1}$ and the normalized negativity
grows as $\Neg(t)\sim\sinh^{4\Delta}(\pi t/\beta)$~\eqref{eq:neg_growth}; the raw, normalised-state
negativity instead saturates~\eqref{eq:raw_saturates}, consistent with $\Neg_{\mathrm{raw}}\le\sqrt{D}$
and the plateau of~\cite{Basu:2024}. As stressed in Remarks~\ref{rem:whatisWnorm}
and~\ref{rem:diagnostic}, $\Neg$ is a rescaled (survival-normalized) diagnostic whose leading growth
coincides with the inverse survival probability; the matching is correspondingly a statement about
the boundary two-point data. Proposition~\ref{prop:neg_rate} (the rate, including the descendant-weight
term) and the exact variance rate~\eqref{eq:var_rate_exact} are exact boundary results.

\textit{Matching and tidal momentum.}
The normalized negativity is at late times a fixed power of the Krylov variance,
$\Neg\propto[\mathrm{Var}(\NKrylov)]^{\Delta}$~\eqref{eq:general_relation}, for every dimension,
and the relation linearizes precisely at $\Delta=1$: only there does the normalized negativity
rate track the Krylov variance rate~\eqref{eq:rate_equality}, a necessary, leading-order,
asymptotic condition (Remark~\ref{rem:asymptotic}). In the proper-momentum language of
Caputa et al.~\cite{Caputa:2024} the rate is $\NegRate\propto\Cplx^{2\Delta-1}P_\rho$, coinciding
with the variance dressing $(1+\Cplx/\Delta)P_\rho$ at $\Delta=1$, where
$\NegRate\propto\Cplx\,P_\rho$ is the rate of tidal stretching of neighbouring infalling geodesics
(section~\ref{sec:tidal_momentum}).

\subsection*{Outlook}

Several boundary questions remain. (i) The operational status of the seed-normalized negativity
$\Neg$ deserves sharpening---whether the physically preferred observable is the raw (bounded,
saturating) negativity, the survival-normalized $\Neg$, or a differently conditioned quantity---since
this fixes the physical content of the matching. (ii) The proportionality constant
$\mathcal{A}(\beta,\varepsilon)$ in~\eqref{eq:rate_equality} should be determined by a
microstate-level analysis. (iii)The same phase-space calculus can be brought to bear on the stabilizer complexity of Hawking radiation~\cite{Basu:2025, Page:1993wv, Penington:2019npb}, the Krylov complexity of purification for thermal states~\cite{Das:2024}, and on toy models such as DSSYK, where the Krylov basis coincides with the chord-length basis~\cite{Basu:2024, Balasubramanian:2022dnj, Balasubramanian:2023kwd}. The extended-probe regime
has been examined concurrently in~\cite{Fatemiabhari:2025}, which treats the extended case
conservatively---as a controlled proposal rather than a first-principles derivation---consistent
with the conjectural status we assign the operator identification.

\subsection{Future Direction: A Bulk String Interpretation}
\label{sec:future_string}

\emph{This subsection is speculative and logically independent of the boundary results above. Its
central object, the operator identification $\Nstr\simeq\NKrylov$, is a conjecture supported only by a
shared Casimir (a necessary, not sufficient, condition); the tidal-activation mechanism is a
heuristic. We include the full derivations here, as a program for future work, rather than in the main
text.}

\medskip
A point particle cannot furnish a bulk avatar of the negativity: its Wigner function is Gaussian
(positive), and Gaussian evolution preserves positivity, generating no negativity. Any bulk
interpretation must involve genuine quantum internal degrees of freedom. We explore the possibility
that the infalling excitation is a quantized fundamental string.

\medskip\noindent\textbf{(i) Exactness of the WZW description.}
String propagation in $\AdS_3$ is governed by the $\SL(2,\mathbb{R})$ Wess--Zumino--Witten (WZW)
model at level $k$, exactly solvable for $\AdS_3\times S^3\times\mathcal{M}_4$
\cite{Maldacena:2000hw,Maldacena:2001km}. The physical spectrum is organized by the discrete-series
UIRs $D_j^+$ of $\SL(2,\mathbb{R})$, with $m^2\lAdS^2=j(j-2)$ and $j=\Delta$. We use only the
zero-mode (spacetime) $\su(1,1)$ Casimir of this sector.

\medskip\noindent\textbf{(ii) The algebraic bridge.}
The Lanczos coefficients~\eqref{eq:lanczos_CFT}, from the CFT thermal two-point function, realize
$D_\Delta^+$ with Bargmann index $k=\Delta$ (Lemma~\ref{lem:su11_krylov}); the string zero-mode
algebra, via $j=\Delta$, realizes the \emph{same} $D_\Delta^+$ (Lemma~\ref{lem:su11_string}). The
discrete-series UIR is labelled by its Bargmann index, so the two carry the same representation---a
necessary condition for identifying them, whose sufficiency is the content of
Conjecture~\ref{conj:operator_id}.

\medskip\noindent\textbf{(iii) A heuristic activation mechanism (frame caveat).}
It is sometimes argued that higher transverse modes are populated because the local temperature
$T_\beta/\sqrt{-g_{tt}}$ diverges near the horizon. This must be treated with care:
$T_\beta/\sqrt{-g_{tt}}$ is the (Tolman) temperature registered by \emph{static} observers, and
diverges at the horizon for that reason. A freely-falling observer crossing the \emph{regular} BTZ
horizon sees no divergence---by the equivalence principle---and since BTZ is locally $\AdS_3$, the
curvature invariants stay finite in the infalling Fermi frame (the tidal tensor is the constant
$\Omega_\perp^2\sim\lAdS^{-2}$ of Appendix~\ref{app:geometric_motivation}). A genuine near-horizon
mechanism must therefore be argued from the global/thermal (quotient) structure of BTZ, not from a
local curvature or temperature divergence. We retain the Hagedorn picture only as qualitative
motivation.

\subsubsection*{Polyakov quantization and the string-size operator}

We model the infalling excitation as a fundamental closed string. Gauge-fixing the Polyakov action to
conformal gauge and then transverse light-cone gauge, and working in a Fermi normal frame anchored to
the classical centre-of-mass geodesic (locally flat target metric), the transverse coordinates are
free massless fields with the closed-string mode expansion
\begin{equation}
X^i(\tau,\sigma)=x^i+\alpha'p^i\tau
+i\sqrt{\tfrac{\alpha'}{2}}\sum_{m\neq0}\frac1m\!\left(\alpha_m^i e^{-im(\tau-\sigma)}
+\tilde\alpha_m^i e^{-im(\tau+\sigma)}\right),
\label{eq:string_expansion}
\end{equation}
with $[\alpha_m^i,\alpha_n^j]=m\delta_{m+n,0}\delta^{ij}$ and similarly for $\tilde\alpha$.
Squaring the transverse fluctuation $\delta X^i$ and averaging over the worldsheet with
$\frac{1}{2\pi}\int_0^{2\pi}e^{i(m+n)\sigma}\mathrm{d}\sigma=\delta_{m+n,0}$ sets $n=-m$, producing
\emph{two} powers of $1/m$; the normally ordered, worldsheet-averaged square is therefore
\begin{equation}
\langle\Delta X^2\rangle_{\mathrm{str}}
=\alpha'\sum_{m=1}^\infty\frac{1}{m^2}
\langle\alpha_{-m}^i\alpha_m^i+\tilde\alpha_{-m}^i\tilde\alpha_m^i\rangle
=\alpha'\sum_{m=1}^\infty\frac{1}{m}\langle\hat n_m^i+\tilde{\hat n}_m^i\rangle ,
\label{eq:variance_stringbit_sum}
\end{equation}
where $\alpha_{-m}^i\alpha_m^i=m\,\hat n_m^i$ with $\hat n_m^i$ the occupation-number operator (from
$[\alpha_m^i,\alpha_{-m}^i]=m$). The $1/m$ weighting in occupation-number language is the standard
signature of near-horizon string spreading~\cite{Susskind:1993aa}: the $m$-th mode carries spatial
extent $\sim\ls/m$. We define the \emph{string-size operator}
\begin{equation}
\Nstr\equiv\sum_{m=1}^\infty\frac{1}{m^2}
\bigl(\alpha_{-m}^i\alpha_m^i+\tilde\alpha_{-m}^i\tilde\alpha_m^i\bigr)
=\sum_{m=1}^\infty\frac{1}{m}\bigl(\hat n_m^i+\tilde{\hat n}_m^i\bigr),
\label{eq:Nstr_def}
\end{equation}
so that $\widehat{\Delta X^2}=\ls^2\Nstr$ with $\alpha'=\ls^2$. This $1/m^2$ (equivalently $1/m$ in
occupation number) weighting distinguishes $\Nstr$ from the raw mode-bilinear sum and from the
level-number operator $N_{\mathrm{level}}=\sum_m m\,\alpha_{-m}^i\alpha_m^i$.

\begin{remark}[Spectral caveat]
\label{rem:spectral}
The eigenvalues of $\Nstr$ are $\sum_m\frac1m(n_m+\tilde n_m)$ with
$n_m,\tilde n_m\in\mathbb{Z}_{\ge0}$---rational combinations---whereas $\NKrylov$ has the integer
spectrum $\{0,1,2,\dots\}$. The identification can hold at most after a nontrivial reorganization of
the transverse Fock space into the lowest-weight $D_\Delta^+$ ladder; the shared Casimir does not by
itself supply that reorganization.
\end{remark}

\subsubsection*{Representation-theoretic structure and the conjecture}

\begin{lemma}[Casimir compatibility]
\label{lem:su11_string}
The discrete-series sector of the string Hilbert space $\Hstr$ relevant to the infalling dynamics
carries a representation of $\su(1,1)$ whose quadratic Casimir equals $\Delta(\Delta-1)$, coinciding
with $C_2^{(K)}=\Delta(\Delta-1)$ of the Krylov chain (Lemma~\ref{lem:su11_krylov}).
\end{lemma}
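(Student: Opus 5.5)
The plan is to build the zero-mode $\su(1,1)$ of the $\SL(2,\mathbb{R})$ WZW model explicitly on the relevant sector of $\Hstr$. We then evaluate its Casimir on the lowest-weight state and propagate the result to the whole module. This mirrors item~(3) of Lemma~\ref{lem:su11_krylov}.

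\emph{Step 1: generators.} From the current algebra $J^a(z)$ of the $\SL(2,\mathbb{R})_k$ model we take the global zero modes $J_0^3,J_0^\pm$. These close into $[J_0^3,J_0^\pm]=\pm J_0^\pm$ and $[J_0^-,J_0^+]=2J_0^3$, the same relations as~\eqref{eq:su11_comm}. The level $k$ enters only the nonzero-mode commutators, so it drops out of the zero-mode algebra. We identify $K_0^{(\mathrm{str})}\equiv J_0^3$ and $K_\pm^{(\mathrm{str})}\equiv J_0^\pm$, and we use the same Casimir normalisation as~\eqref{eq:casimir_K}:
\[
C_2^{(\mathrm{str})}=K_0^{(\mathrm{str})}\bigl(K_0^{(\mathrm{str})}-1\bigr)-K_+^{(\mathrm{str})}K_-^{(\mathrm{str})} .
\]

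\emph{Step 2: the sector.} Following~\cite{Maldacena:2000hw}, the unflowed discrete-series primaries of spin $j$ form lowest-weight modules $D_j^+$. Their lowest state $|j;j\rangle$ obeys $J_0^-|j;j\rangle=0$ and $J_0^3|j;j\rangle=j|j;j\rangle$. Acting with $C_2^{(\mathrm{str})}$ on this state, the $K_+K_-$ term vanishes, which leaves $C_2^{(\mathrm{str})}=j(j-1)$. Because $C_2^{(\mathrm{str})}$ commutes with all three generators, Schur's lemma gives $C_2^{(\mathrm{str})}=j(j-1)\mathbf{1}$ on the whole irreducible orbit $\{(J_0^+)^n|j;j\rangle\}$. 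One can also check this directly: $J_0^\pm$ have the matrix elements $\sqrt{(n+1)(n+2j)}$, and the cancellation then proceeds exactly as in the proof of Lemma~\ref{lem:su11_krylov}.

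\emph{Step 3: fixing $j=\Delta$.} The mass–dimension relation $m^2\lAdS^2=j(j-2)$ together with the dictionary $j=\Delta$ stated in item~(i) sets $C_2^{(\mathrm{str})}=\Delta(\Delta-1)$. This equals $C_2^{(K)}$ from~\eqref{eq:casimir_K}.

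The main obstacle is Step~2's implicit claim that the sector ``relevant to the infalling dynamics'' is a single unflowed $D_\Delta^+$. Spectrally flowed representations and the transverse oscillator excitations also act within the string Hilbert space. We would handle this by restricting to the $w=0$, oscillator-ground sector, where the spacetime $\su(1,1)$ is just the zero-mode algebra. We would then note that transverse oscillators commute with $J_0^a$ and so only add multiplicity without changing the Casimir. A second, subtler point is the convention mismatch flagged in the footnote: the holomorphic weight versus the total dimension. We would state explicitly that the lemma uses the holomorphic (one-chiral-tower) $\su(1,1)$ with $j=\Delta=h$. The claim is only Casimir equality, and it says nothing about the spectral reorganisation discussed in Remark~\ref{rem:spectral}.
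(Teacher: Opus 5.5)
Your proof is correct and follows the paper's own route: the zero-mode $\mathfrak{sl}(2,\mathbb{R})$ of the WZW model, the Casimir $j(j-1)$ on a spin-$j$ discrete-series state, and the dictionary $j=\Delta$; you write out the lowest-weight evaluation and its propagation along $(J_0^+)^n|j;j\rangle$, where the paper simply cites~\cite{Maldacena:2000hw}. Two side remarks should be trimmed: the mass relation in Step~3 plays no role in the Casimir (and, read with $j=h$, it clashes with the footnote's $m^2\lAdS^2=\Delta_{\mathrm{tot}}(\Delta_{\mathrm{tot}}-2)$), and while internal-CFT oscillators do commute with $J_0^a$, the $\SL(2,\mathbb{R})$ current modes obey $[J_0^3,J^\pm_{-n}]=\pm J^\pm_{-n}$ and generate components of lowest weight $j\pm1$ with a different Casimir, so your restriction to the unflowed, oscillator-ground sector is essential rather than a convenience.
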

\begin{proof}
The $\SL(2,\mathbb{R})$ WZW spectrum~\cite{Maldacena:2000hw} is organized by the zero-mode
$\mathfrak{sl}(2,\mathbb{R})$ with Casimir $j(j-1)$ on a spin-$j$ discrete-series state; the
dictionary~\cite{Balasubramanian:1998sn, Horowitz:1999jd} sets $j=\Delta$, giving $\Delta(\Delta-1)$,
matching~\eqref{eq:casimir_K}. $\square$
\end{proof}

\begin{proposition}[Conditional operator identification]
\label{prop:operator_id}
Suppose $\Nstr$ of~\eqref{eq:Nstr_def}, restricted to the infalling discrete-series sector, is the
lowest-weight number operator of that sector (its eigenstates of eigenvalue $n=0,1,2,\dots$ form a
lowest-weight ladder under an $\su(1,1)$ action with Bargmann index $k=\Delta$, with
$\Nstr=K_0^{(\mathrm{str})}-\Delta$). Then there is a unitary $U:\Hstr\to\Kchain$ with
$U K_\mu^{(\mathrm{str})}U^\dagger=K_\mu^{(K)}$ under which $\Nstr=U^\dagger\NKrylov U$, i.e.\
$\Nstr\simeq\NKrylov$, and all moments agree.
\end{proposition}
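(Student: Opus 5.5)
The plan is to build $U$ explicitly by matching lowest-weight ladders, and then let irreducibility (Schur) do the rest. By hypothesis, the sector of $\Hstr$ in question carries an $\su(1,1)$ action $\{K_0^{(\mathrm{str})},K_\pm^{(\mathrm{str})}\}$ with a normalised lowest-weight vector $|\Omega\rangle$. This vector satisfies $K_-^{(\mathrm{str})}|\Omega\rangle=0$ and $K_0^{(\mathrm{str})}|\Omega\rangle=\Delta|\Omega\rangle$, and $\Nstr=K_0^{(\mathrm{str})}-\Delta$ has spectrum $\{0,1,2,\dots\}$. On the boundary side, Lemma~\ref{lem:su11_krylov} supplies the same structure on $\Kchain$, with lowest-weight vector $|K_0\rangle$, and \eqref{eq:NKrylov_K0} gives $\NKrylov=K_0^{(K)}-\Delta$.

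First, I will define the string basis
\[
|s_n\rangle\equiv\Bigl[\tfrac{n!\,\Gamma(2\Delta+n)}{\Gamma(2\Delta)}\Bigr]^{-1/2}\bigl(K_+^{(\mathrm{str})}\bigr)^n|\Omega\rangle .
\]
Using only the commutators \eqref{eq:su11_comm} and $K_-^{(\mathrm{str})}|\Omega\rangle=0$, induction on $n$ shows three things. These vectors are orthonormal, since different $K_0$ eigenvalues are orthogonal and the norms follow from $K_-K_+^{n}|\Omega\rangle=n(n+2\Delta-1)K_+^{n-1}|\Omega\rangle$. They satisfy $K_0^{(\mathrm{str})}|s_n\rangle=(n+\Delta)|s_n\rangle$. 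And $K_\pm^{(\mathrm{str})}$ act on them with exactly the matrix elements \eqref{eq:KpK}--\eqref{eq:KmK}. This is the same computation that fixes the prefactor of \eqref{eq:coherent_ansatz}, so it can be quoted rather than redone.

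Second, I will set $U|s_n\rangle\equiv|K_n\rangle$ and extend linearly. $U$ maps an orthonormal set onto the orthonormal basis of $\Kchain$, so it is an isometry onto $\Kchain$. Since the generators have identical matrix elements in the two bases, $UK_\mu^{(\mathrm{str})}U^\dagger=K_\mu^{(K)}$ for $\mu=0,\pm$. Applying this to $\mu=0$ and subtracting $\Delta$ gives $\Nstr=U^\dagger\NKrylov U$. Every moment then agrees, because $\langle\Nstr^k\rangle_{U^\dagger\psi}=\langle\NKrylov^k\rangle_\psi$ for every state and every $k$, and hence so do the full negative-binomial statistics of Lemma~\ref{lem:NB}.

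The main obstacle is showing that $U$ is unitary on the whole sector and not merely on the cyclic span of $|\Omega\rangle$. Concretely, I must show that $\{|s_n\rangle\}$ spans the sector, i.e.\ that the eigenspaces of $\Nstr$ are one-dimensional. I will argue this from the hypothesis: it says that the eigenstates form a single lowest-weight ladder with index $\Delta$. That makes the representation irreducible, and by Lemma~\ref{lem:su11_string} its Casimir is $\Delta(\Delta-1)$, so it is isomorphic to $D_\Delta^+$. Any vector orthogonal to the cyclic module would produce a second lowest-weight vector inside an eigenspace of $\Nstr$, contradicting the single-ladder assumption.

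Remark~\ref{rem:spectral} warns that without the hypothesis the rational spectrum of $\Nstr$ and its large degeneracies would defeat this step. The proof is therefore genuinely conditional. The proposition says nothing about whether the transverse Fock space actually admits such a reorganisation, and the proof should state this explicitly.
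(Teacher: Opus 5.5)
Your proposal is correct and takes the same route as the paper. The paper's proof invokes the unitary equivalence of $D_\Delta^+$ representations with equal Bargmann index (citing Perelomov) and sends lowest-weight vector to lowest-weight vector; you build that intertwiner explicitly through the normalised ladder $|s_n\rangle$ and check the spanning/irreducibility step that the paper absorbs into the word ``UIR''. You do not need Lemma~\ref{lem:su11_string}, which concerns the spacetime zero-mode algebra rather than the hypothesised action generating $\Nstr$: evaluating $K_0(K_0-1)-K_+K_-$ on $|\Omega\rangle$ gives $\Delta(\Delta-1)$ directly.
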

\begin{proof}
By Lemma~\ref{lem:su11_krylov} the Krylov chain realizes $D_\Delta^+$ with
$\NKrylov=\mathrm{diag}(0,1,2,\dots)$. Under the hypothesis $\Nstr$ acts likewise in the
lowest-weight basis. UIRs of $\SU(1,1)$ with equal Bargmann index are unitarily
equivalent~\cite{Perelomov:1986}; mapping lowest-weight vector to lowest-weight vector fixes $U$ up to
a phase and yields the intertwining relations. $\square$
\end{proof}

\begin{conjecture}[Operator identification]
\label{conj:operator_id}
The hypothesis of Proposition~\ref{prop:operator_id} holds: $\Nstr$, restricted to the infalling
discrete-series sector, is unitarily equivalent to the lowest-weight number operator of $D_\Delta^+$,
so that $\Nstr\simeq\NKrylov$.
\end{conjecture}

\begin{remark}[Status]
\label{rem:operator_caveat}
The shared Casimir is necessary but not sufficient. It guarantees an intertwiner between the abstract
representations, but not that the physically defined $\Nstr$---built from worldsheet oscillators, with
the rational spectrum of Remark~\ref{rem:spectral}---is the lowest-weight Cartan generator rather than
some other self-adjoint operator carrying the same representation. The spacetime $\SL(2,\mathbb{R})$
organizing the WZW spectrum acts on zero-mode data, whereas $\Nstr$ measures transverse oscillator
content; establishing the conjecture requires the explicit bulk-to-boundary map at finite string
coupling and remains open.
\end{remark}

\subsubsection*{The holographic variance map}

\emph{Under Conjecture~\ref{conj:operator_id}}, all moments of $\Nstr$ and $\NKrylov$ agree. The mean
maps the transverse spread to spread complexity, $\langle\Delta X^2\rangle_{\mathrm{str}}=\ls^2\Cplx$.
The normalized negativity rate tracks the variance; mapping
$\mathrm{Var}(\widehat{\Delta X^2})=\ls^4\mathrm{Var}(\NKrylov)$ and using the exact NB
identity~\eqref{eq:var_mean_identity},
\begin{equation}
\boxed{\;
\ls^2\,\mathrm{Var}(\NKrylov)
=\langle\Delta X^2\rangle_{\mathrm{str}}
+\frac{1}{2\Delta\ls^2}\langle\Delta X^2\rangle_{\mathrm{str}}^2 \; ,}
\label{eq:master_variance}
\end{equation}
exact within the negative-binomial structure and valid whenever
Conjecture~\ref{conj:operator_id} holds. In the late-time limit
$\langle\Delta X^2\rangle_{\mathrm{str}}\gg\ls^2$ the map is purely quadratic: the Krylov variance is
driven by the \emph{square} of the mean transverse spreading. Combined with the boundary matching, this
would give---at $\Delta=1$ and under the conjecture---the bulk reading
\begin{equation}
\NegRate(t)\;\propto\;\frac{1}{\ls^4}\frac{\mathrm{d}}{\mathrm{d}t}\mathrm{Var}(\widehat{\Delta X^2})
\;\simeq\;\frac{1}{2\Delta\ls^4}\frac{\mathrm{d}}{\mathrm{d}t}
\bigl(\langle\Delta X^2\rangle_{\mathrm{str}}^2\bigr),
\label{eq:final_duality}
\end{equation}
i.e.\ the normalized negativity rate as the rate of growth of the squared transverse string area.
Making~\eqref{eq:final_duality} precise---settling Conjecture~\ref{conj:operator_id}, resolving the
spectral mismatch of Remark~\ref{rem:spectral}, and re-examining the activation mechanism in the
infalling frame---is left for future work.

\acknowledgments

I am deeply grateful to Prof. Onkar Parrikar for his mentorship, comprehensive assistance, and support
throughout this project. I want to thank Prof. Saskia Demulder and my friend Dr. Rathindranath for a careful and extensive review of an earlier draft,
whose detailed comments---on the phase-space negativity integral, the interpretation of the normalized
diagnostic, the asymptotic status of the matching, the near-horizon frame, and the transverse-mode
weighting---substantially improved both the derivations and their presentation. I also thank
Avijit Sinha, Pruthvi Suriyadevara, and Jyotirmoy Mukherjee for helpful discussions, and acknowledge
the support of the Department of Theoretical Physics, TIFR, Mumbai. Further comments and corrections on
this draft are welcome.

\noindent\textbf{Note added.}
While this manuscript was being completed, the work of Balasubramanian, Caputa, Parrikar and
Singh~\cite{Balasubramanian:2026wn} appeared, in which the Wigner negativity in Krylov space is
studied as a probe of emergent semiclassicality across several solvable models. For 2d CFT local
quenches those authors find that the state Wigner negativity remains an $O(1)$ constant at late
times; this is precisely the raw-negativity plateau of Proposition~\ref{prop:plateau}, and the two
computations agree where they overlap. The seed-normalized diagnostic studied in the present paper
isolates the growing second-moment information carried on top of that plateau.
\newpage

\appendix

\section{Stirling Asymptotics and the Gamma Kernel}
\label{app:stirling}
The continuum limit rests on the large-$Q$ behaviour of the kernel
in~\eqref{eq:Wnorm_discrete},
\begin{equation}
K(Q,m) = \frac{\sqrt{\Gamma(2\Delta+Q+m)\,\Gamma(2\Delta+Q-m)}}{\sqrt{(Q+m)!\,(Q-m)!}} .
\label{eq:kernel_def}
\end{equation}
Writing $(Q\pm m)!=\Gamma(Q\pm m+1)$, the square factorises into two ratios of identical
structure,
\begin{equation}
K(Q,m)^2
= \frac{\Gamma(2\Delta+Q+m)}{\Gamma(Q+m+1)}\cdot
  \frac{\Gamma(2\Delta+Q-m)}{\Gamma(Q-m+1)} .
\label{eq:kernel_squared}
\end{equation}
Each factor is $\Gamma(z+a)/\Gamma(z+b)$ with $z=Q\pm m$, $a=2\Delta$, $b=1$. Stirling's
expansion $\ln\Gamma(z)=(z-\tfrac12)\ln z-z+\tfrac12\ln(2\pi)+O(z^{-1})$ gives
\begin{equation}
\frac{\Gamma(z+a)}{\Gamma(z+b)}
= z^{\,a-b}\left[\,1 + \frac{(a-b)(a+b-1)}{2z} + O\!\left(z^{-2}\right)\right] ,
\label{eq:gamma_ratio}
\end{equation}
so with $a-b=2\Delta-1$,
\begin{equation}
K(Q,m)^2
= (Q+m)^{2\Delta-1}(Q-m)^{2\Delta-1}
  \left[\,1 + \Delta(2\Delta-1)\!\left(\tfrac{1}{Q+m}+\tfrac{1}{Q-m}\right) + \cdots\right] ,
\label{eq:kernel_expanded}
\end{equation}
and taking the positive square root and passing to $x=m/Q\in[-1,1]$,
\begin{equation}
\boxed{\;K(Q,x) \simeq Q^{2\Delta-1}\bigl(1-x^2\bigr)^{\Delta-1/2}\;.}
\label{eq:stirling_result}
\end{equation}
The relative correction is uniformly $O(1/Q)$ for $x$ bounded away from $\pm1$. Near
$x=\pm1$ Stirling degrades, but the weight $(1-x^2)^{\Delta-1/2}$ vanishes there for every
$\Delta>1/2$, so the boundary region carries negligible measure---the same vanishing that
removes the Euler--Maclaurin boundary terms in Appendix~\ref{app:euler}.

\section{Euler--Maclaurin Continuum Limit}
\label{app:euler}

With Appendix~\ref{app:stirling}, the mode sum in~\eqref{eq:Wnorm_discrete} becomes
\begin{equation}
\sum_{m=-Q}^{Q} f(m) , \qquad
f(m) = K(Q,m)\,e^{im\theta}
\;\simeq\; Q^{2\Delta-1}\Bigl(1-\tfrac{m^2}{Q^2}\Bigr)^{\Delta-1/2} e^{iZm/Q} ,
\label{eq:em_summand}
\end{equation}
with $\theta\equiv 4\pi p/D+2\varphi_A$ and $Z\equiv Q\theta$. The Euler--Maclaurin formula
\begin{equation}
\sum_{m=-Q}^{Q} f(m)
= \int_{-Q}^{Q} f\,\mathrm{d}m
+ \frac{f(Q)+f(-Q)}{2}
+ \sum_{k\ge1}\frac{B_{2k}}{(2k)!}\Bigl[f^{(2k-1)}(Q)-f^{(2k-1)}(-Q)\Bigr] + R
\label{eq:euler_maclaurin}
\end{equation}
has only boundary corrections. Changing variables $x=m/Q$,
\begin{equation}
\sum_{m=-Q}^{Q}f(m)\approx Q^{2\Delta}\int_{-1}^{1}(1-x^2)^{\Delta-1/2}e^{iZx}\,\mathrm{d}x .
\label{eq:continuum_limit}
\end{equation}
For $\Delta>1/2$ the weight has an algebraic zero at $x=\pm1$, so $f(\pm Q)=0$; the residual
sum--integral mismatch is suppressed by a positive power of $Q$ (in the boundary layer
$m=Q-s$, $f\simeq2^{\Delta-1/2}Q^{\Delta-1/2}s^{\Delta-1/2}e^{iZm/Q}$, of order
$Q^{\Delta-1/2}$ against the leading $Q^{2\Delta}$, giving a relative
$O(Q^{-(\Delta+1/2)})$; the smooth interior contributes $O(Q^{-2})$). The
integral~\eqref{eq:continuum_limit} is Poisson's Bessel representation~\eqref{eq:poisson_bessel};
reinstating the prefactor $|A|^{2Q}/(D\,\Gamma(2\Delta))$ and collapsing via
Legendre duplication
$\Gamma(2\Delta)=2^{2\Delta-1}\pi^{-1/2}\Gamma(\Delta)\Gamma(\Delta+\tfrac12)$,
\begin{equation}
\frac{1}{\Gamma(2\Delta)}\cdot\sqrt{\pi}\,\Gamma\!\left(\Delta+\tfrac12\right)\cdot 2^{\Delta}
= \frac{\pi}{2^{\Delta-1}\,\Gamma(\Delta)} ,
\label{eq:legendre}
\end{equation}
reproduces the closed-form distribution~\eqref{eq:wigner_bessel}.

\section{Direct Evaluation of the Normalized Negativity}
\label{app:negativity}

This appendix records the evaluation of the seed-normalized negativity $\Neg(t)$ from the
macroscopic Bessel envelope~\eqref{eq:wigner_bessel}, \emph{keeping the Bessel variable
$Z=Q\tilde\theta$ throughout}. Two points are essential: the change of variables from the angular
coordinate $\tilde\theta$ to $Z$ carries an explicit factor $1/Q$, and the discrete-to-continuum
measure supplies three factors of two---the mirror copy in the momentum
sum~\eqref{eq:momentum_measure}, the even $\pm Z$ envelope in the angular
integral~\eqref{eq:cDelta}, and the half-integer radial lattice~\eqref{eq:radial_measure}.
Dropping any of them produces a spurious deficit of up to $8$ in the overall coefficient relative
to the exact discrete evaluation.

\paragraph{Continuum limit.}
With $\tilde\theta=4\pi p/D+2\varphi_A$, the momentum sum
$\sum_p\to2\times\frac{D}{4\pi}\int_{\mathrm{one\ period}}\mathrm{d}\tilde\theta$ (mirror copy)
and the radial sum $\sum_Q\to2\int_0^\infty\mathrm{d}Q$ (half-integer lattice), so
\begin{equation}
\Neg(t)\approx 4\cdot\frac{D}{4\pi}\int_0^\infty\!\mathrm{d}Q
\int_{\mathrm{one\ period}}\mathrm{d}\tilde\theta\,
\bigl|\Wnorm(Q,\tilde\theta,t)\bigr| ,
\qquad
\bigl|\Wnorm\bigr|=\frac{\pi\,Q^{2\Delta}}{D\,2^{\Delta-1}\Gamma(\Delta)}
e^{-2Q|\ln|A||}\frac{|J_\Delta(Z)|}{|Z|^\Delta} .
\end{equation}
Note the absence of any survival factor $\Psurv=|\psi_0|^2$: it was divided out at the level of the
amplitudes~\eqref{eq:chi_def}.

\paragraph{Angular integral (the corrected step).}
Substituting $Z=Q\tilde\theta$, $\mathrm{d}\tilde\theta=\mathrm{d}Z/Q$, and using that the
envelope is even in $Z$,
\begin{equation}
\int_{\mathrm{one\ period}}\mathrm{d}\tilde\theta\,\frac{|J_\Delta(Z)|}{|Z|^\Delta}
=\frac1Q\int_{-\infty}^{\infty}\mathrm{d}Z\,\frac{|J_\Delta(Z)|}{|Z|^\Delta}
=\frac{2c_\Delta}{Q} ,
\qquad c_\Delta=\int_0^\infty\frac{|J_\Delta(Z)|}{Z^\Delta}\mathrm{d}Z .
\label{eq:app_cDelta}
\end{equation}
The integral $c_\Delta$ is finite for $\Delta>1/2$: the integrand is regular at $Z=0$
($\to2^{-\Delta}/\Gamma(\Delta+1)$) and decays as $Z^{-\Delta-1/2}$ at large $Z$. The naive large-$Z$
(Hankel) form alone would make the $\tilde\theta$-integral diverge at $\tilde\theta\to0$; that
``divergence'' is the region $Z\lesssim1$ where the Hankel form fails, whose proper treatment is
precisely the finite $c_\Delta$ together with the factor $1/Q$.

\paragraph{Radial integral and assembly.}
The $1/Q$ reduces the radial power to $Q^{2\Delta-1}$:
\begin{equation}
\int_0^\infty\mathrm{d}Q\,Q^{2\Delta-1}e^{-2Q|\ln|A||}=\frac{\Gamma(2\Delta)}{(2|\ln|A||)^{2\Delta}} ,
\end{equation}
and collecting the coefficient,
$4\cdot\frac{D}{4\pi}\cdot\frac{\pi}{D\,2^{\Delta-1}\Gamma(\Delta)}\cdot 2c_\Delta
=\frac{2c_\Delta}{2^{\Delta-1}\Gamma(\Delta)}=\frac{c_\Delta}{2^{\Delta-2}\Gamma(\Delta)}$,
so that
\begin{equation}
\boxed{\;\Neg(t)\approx\frac{c_\Delta\,\Gamma(2\Delta)}{2^{\Delta-2}\Gamma(\Delta)}\,
\frac{1}{(2|\ln|A(t)||)^{2\Delta}}\;,}
\label{eq:app_neg_final}
\end{equation}
which is~\eqref{eq:neg_full}. Since $2|\ln|A||=-\ln(1-\xi)\simeq\xi\simeq\varepsilon_0/S^2$ at late
times, $\Neg(t)\simeq\mathrm{const}\cdot\sinh^{4\Delta}(\pi t/\beta)/\sin^{4\Delta}(2\pi\varepsilon/\beta)$,
the growth used in section~\ref{sec:growth}. The raw negativity is recovered as
$\Neg_{\mathrm{raw}}=\Psurv\,\Neg=(c_\Delta\Gamma(2\Delta)/2^{\Delta-2}\Gamma(\Delta))\,
[\xi/(-\ln(1-\xi))]^{2\Delta}$, which tends to the constant
$\Neg_\infty=c_\Delta\Gamma(2\Delta)/2^{\Delta-2}\Gamma(\Delta)$ as $\xi\to0$: the
normalised-state negativity saturates, and this coefficient---not one computed on a single
period, a single side of $Z$, or the unit radial lattice---is the plateau value verified against
the exact discrete Wigner function in figure~\ref{fig:raw_saturation}.
\section{A Dynamical Heuristic for the Operator Identification}
\label{app:geometric_motivation}

We record a dynamical heuristic for Conjecture~\ref{conj:operator_id} that engages the
worldsheet operator $\Nstr$ of~\eqref{eq:Nstr_def} directly. \emph{As with
the future-direction discussion of section~\ref{sec:future_string}, this is motivational, not a proof; the frame caveat of
section~\ref{sec:future_string}(iii) applies throughout.}

\subsection*{Transverse fluctuations and geodesic deviation}
Writing the worldsheet field as a centre-of-mass geodesic $x^\mu(\tau)$ plus a transverse
fluctuation $\delta X^i$, the geodesic-deviation expansion of the Polyakov equations gives
\begin{equation}
\bigl(\partial_\tau^2-\partial_\sigma^2\bigr)\delta X^i
+ \mathcal{A}^i_{\ j}(\tau)\,\delta X^j = 0 ,
\qquad
\mathcal{A}^i_{\ j}(\tau) \equiv R^i_{\ \mu j\nu}\,\dot{x}^\mu \dot{x}^\nu ,
\label{eq:geodesic_deviation}
\end{equation}
with $\mathcal{A}^i_{\ j}$ the tidal tensor~\cite{Susskind:1993aa,Horowitz:1997jc}. The curvature term supplies the transverse modes
with a $\tau$-dependent frequency.

\subsection*{The tidal frequency in \texorpdfstring{$\AdS_3$}{AdS3}}
Because $\AdS_3$ is maximally symmetric,
$R_{\mu\nu\rho\sigma}=-\lAdS^{-2}(g_{\mu\rho}g_{\nu\sigma}-g_{\mu\sigma}g_{\nu\rho})$, the
transverse tidal tensor is $\mathcal{A}^i_{\ j}=\Omega_\perp^2\,\delta^i_j$ with
$\Omega_\perp^2\sim\lAdS^{-2}$ a \emph{constant} set by the curvature scale. We emphasise (cf.
the frame caveat) that this local invariant does \emph{not} diverge at the horizon; any
enhancement of mode production must come from the relation to boundary time,
$\mathrm{d}\tau/\mathrm{d}t=\sqrt{-g_{tt}}$, i.e.\ from the static-frame redshift rather than
from local curvature. Each transverse mode obeys
\begin{equation}
\ddot{c}_m^i + \Omega_m^2(\tau)\,c_m^i = 0 ,
\qquad
\Omega_m^2(\tau) = m^2 + \Omega_\perp^2(\tau) .
\label{eq:mode_oscillator}
\end{equation}

\subsection*{The squeezing \texorpdfstring{$\SU(1,1)$}{SU(1,1)} of the oscillators}
A time-dependent frequency evolves the in-vacuum into a two-mode squeezed state via a
Bogoliubov transformation
$\alpha_m^i(t)=u_m\alpha_m^i+v_m\tilde\alpha_{-m}^{i\dagger}$, $|u_m|^2-|v_m|^2=1$, whose
generators are quadratic in the oscillators,
\begin{equation}
\kappa_+^{(m)} = \alpha_{-m}^i\,\tilde\alpha_{-m}^i , \qquad
\kappa_-^{(m)} = \bigl(\kappa_+^{(m)}\bigr)^\dagger , \qquad
\kappa_0^{(m)} = \tfrac12\bigl(\alpha_{-m}^i\alpha_m^i+\tilde\alpha_{-m}^i\tilde\alpha_m^i\bigr)+\mathrm{const} ,
\label{eq:squeeze_generators}
\end{equation}
and close on $\su(1,1)$. The relevant $\SU(1,1)$ for $\Nstr$ is this
squeezing algebra, acting on the transverse oscillator content, \emph{not} the spacetime
zero-mode algebra of Lemma~\ref{lem:su11_string}---the distinction underlying
Remark~\ref{rem:operator_caveat}. The infalling transverse vacuum is, mode by mode, an
$\SU(1,1)$ coherent state~\cite{Perelomov:1986}.

\subsection*{Negative-binomial occupation and the effective index}
For a mode-pair with Bargmann index $k_m$ and squeezing $\lambda_m=\tanh^2 r_m$, the
occupation is
\begin{equation}
P_m(n) = \binom{n+2k_m-1}{n}\,(1-\lambda_m)^{2k_m}\,\lambda_m^{\,n} ,
\label{eq:squeeze_NB}
\end{equation}
identical in form to~\eqref{eq:NB_dist}. Under a common tidal drive the aggregate is again an
$\SU(1,1)$ coherent state, with an effective index
\begin{equation}
k_{\mathrm{eff}} = \sideset{}{'}\sum_{m\ge1} \frac{k_m}{m^2} ,
\label{eq:k_eff}
\end{equation}
where the $1/m^2$ weight is inherited from the corrected $\Nstr$
of~\eqref{eq:Nstr_def} and the prime denotes a regularization of the mode sum. The aggregate
occupation is $\NB(2k_{\mathrm{eff}},|A|^2)$, coinciding with the Krylov distribution when
\begin{equation}
\boxed{\;k_{\mathrm{eff}} = \Delta\;,}
\label{eq:index_match}
\end{equation}
equivalently $k_{\mathrm{eff}}(k_{\mathrm{eff}}-1)=\Delta(\Delta-1)$, reproducing
Lemma~\ref{lem:su11_string} from the worldsheet side.

\subsection*{Status}
This heuristic shows that (i) the transverse oscillators acquire a curvature-induced
$\tau$-dependent frequency; (ii) their evolution is an $\SU(1,1)$ squeezing built from the
oscillators $\Nstr$ counts; and (iii) the resulting occupation is
negative-binomial. It does \emph{not} constitute a proof: the matching~\eqref{eq:index_match}
requires that the regularized sum~\eqref{eq:k_eff} and the precise redshift profile conspire
to give $k_{\mathrm{eff}}=\Delta$; the identification of the aggregate number operator with
$\NKrylov$ (rather than a level-dependent reshuffling with the same Casimir,
cf.\ the rational spectrum of Remark~\ref{rem:spectral}) requires the explicit
bulk-to-boundary map; and, per the frame caveat, the activation itself is not a local-curvature
effect. We therefore retain the identification as a conjecture.

\bibliographystyle{JHEP}
\bibliography{references}

\end{document}